\newtheorem{theorem}{Theorem}[section]
\newtheorem{proposition}[theorem]{Proposition}
\newtheorem{lemma}[theorem]{Lemma}
\newtheorem{corollary}[theorem]{Corollary}
\def\boldhead#1:{\par\vskip 7pt\noindent{\bf #1:}\hskip 10pt}
\def\ithead#1:{\par\vskip 7pt\noindent{\it #1:}\hskip 10pt}
\def\inline#1:{\par\vskip 7pt\noindent{\bf #1:}\hskip 10pt}
\def\midinline#1:{\par\noindent{\bf #1:}\hskip 10pt}
\def\dnsinline#1:{\par\vskip -7pt\noindent{\bf #1:}\hskip 10pt}
\def\ddnsinline#1:{\newline{\bf #1:}\hskip 10pt}
\def\largeinline#1:{\par\vskip 7pt\noindent{\large\bf #1:}\hskip 10pt}
\long\def\comment #1\commentend{}
\long\def\commhide #1\commhideend{}
\long\def\commfull #1\commend{#1}
\long\def\commabs #1\commenda{}
\long\def\commtim #1\commendt{#1}
\long\def\commb #1\commbend{}
\long\def\commedit #1\commeditend{} 
\long\def\commB #1\commBend{}       
\long\def\commex #1\commexend{}     
\long\def\commsiena #1\commsienaend{}  
\long\def\commBI #1\commBIend{}  
\long\def\CProof #1\CQED{}
\def\blackslug{\hbox{\hskip 1pt \vrule width 4pt height 8pt
    depth 1.5pt \hskip 1pt}}
\def\QED{\quad\blackslug\lower 8.5pt\null\par}
\def\Proof{\par\noindent{\bf Proof:~}}
\def\proof{\Proof}
\long\def\PPP#1{\noindent{\bf Proof:}{ #1}{\quad\blackslug\lower 8.5pt\null}}
\long\def\PP#1{\noindent {\bf Proof}:{ #1} $\Box$ \\}
\long\def\denspar #1\densend
\newif\ifnotesw\noteswtrue
\ifnotesw\marginpar[\hfill\(\top\)]{\(\top\)}\fi}%
\ifnotesw\marginpar[\hfill\(\bot\)]{\(\bot\)}\fi}
\newcommand{\mnote}[1]%
    {\ifnotesw\marginpar%
        [{\scriptsize\it\begin{minipage}[t]{\marginparwidth}
        \raggedleft#1%
                        \end{minipage}}]%
        {\scriptsize\it\begin{minipage}[t]{\marginparwidth}
        \raggedright#1%
                        \end{minipage}}%
    \fi}
\def\MathF{\hbox{\rm I\kern-2pt F}}
\def\MathP{\hbox{\rm I\kern-2pt P}}
\def\MathR{\hbox{\rm I\kern-2pt R}}
\def\MathZ{\hbox{\sf Z\kern-4pt Z}}
\def\MathN{\hbox{\rm I\kern-2pt I\kern-3.1pt N}}
\def\MathC{\hbox{\rm \kern0.7pt\raise0.8pt\hbox{\footnotesize I}
\kern-4.2pt C}}
\def\MathQ{\hbox{\rm I\kern-6pt Q}}
\newsavebox{\ttop}\newsavebox{\bbot}
\begin{document}
\newtheorem{question}{Question}
\newtheorem{lem}{Lemma}

\newcommand {\ignore} [1] {}

\def \bfm {\bf \boldmath}

\def \AA {{\alpha}}
\def \BB {{\beta}}
\def \CC {{\gamma}}
\def \PP {{\cal P}}

\def \pa {{\boldmath $A$}}
\def \pb {{\boldmath $B$}}
\def \pc {{\boldmath $C$}}
\def \px {{\boldmath $X$}}
\def \py {{\boldmath $Y$}}
\def \pz {{\boldmath $Z$}}

\fontfamily{cmss} \selectfont
\baselineskip 20pt \centerline{\bf \Large The Tower of Hanoi problem on \texttt{Path$_{h}$} graphs} \vskip
15pt
{\sc Daniel Berend} \\
\small {\it berend@cs.bgu.ac.il} \\
{\it Departments of Mathematics and Computer Science, Ben-Gurion University, Beer-Sheva, Israel} \\ \normalsize
\vskip 12pt
{\sc Amir Sapir} \\
\small {\it amirsa@cs.bgu.ac.il} \\
{\it Department of Software Systems, Sapir College, Western Negev, Israel \footnote{\normalsize \fontfamily{cmss} \selectfont Research supported in part by the Sapir Academic College, Israel.} and} \\
{\it The Center for Advanced Studies in Mathematics at Ben-Gurion University, Beer-Sheva, Israel
} \\ \normalsize
\vskip 12pt
{\sc Shay Solomon} \\
\small {\it shayso@cs.bgu.ac.il} \\
{\it Department of Computer Science, Ben-Gurion University, Beer-Sheva, Israel} \\ \normalsize

\fontfamily{cmss} \selectfont


\begin {abstract}
The generalized Tower of Hanoi problem with $h \ge 4$ pegs is known to require a sub-exponentially fast growing number of moves in order to
transfer a pile of $n$ disks from one peg to another. In this paper
we study the \texttt{Path$_{h}$} variant, where the pegs are placed
along a line, and disks can be moved from a peg to its nearest
neighbor(s) only.

Whereas in the simple variant there are $h(h-1)/2$
possible bi-directional interconnections among pegs, here there are
only $h-1$ of them. Despite the significant reduction in the number
of interconnections, the number of moves needed to transfer a pile
of $n$ disks between any two pegs also grows
sub-exponentially as a function of $n$.

We study these graphs, identify sets of mutually recursive tasks, and
obtain a relatively tight upper bound for the number of moves, depending on $h, n$ and the source and destination pegs.
\end {abstract}

Keywords: {\it Tower of Hanoi, path graphs, analysis of algorithms}
\pagenumbering {arabic}
\large \baselineskip 18pt
\section{Introduction}~\label{sec1}
In the well-known Tower of Hanoi problem, proposed over a hundred
years ago by Lucas~\cite{Lucas1893}, a player is given 3 pegs and a
certain number $n$ of disks of distinct sizes, and is required to
transfer them from one peg to another. Initially all disks are
stacked (composing a tower) on the first peg (the source) ordered
monotonically by
size, with the smallest at the top and the largest at the bottom.
The goal is to transfer them to the third peg (the destination),
moving only topmost disks, and never placing a disk on top of a
smaller one. The well-known recursive algorithm that accomplishes
this task requires $2^{n}-1$ steps, and is the unique optimal
algorithm for the problem. The educational aspects of the Tower of Hanoi puzzle have been reinforced recently, by a series of papers by Minsker (\cite{Mins2007,Mins2008,Mins2009}), composing variants for the sake of studying their combinatorial as well as algorithmic aspects.

Work on this problem still goes on, studying properties of solution
instances, as well as variants of the original problem. Connections
between Pascal's triangle, the Sierpi\'{n}ski gasket and the Tower
of Hanoi are established in \cite{Hinz92}, and to some classical numbers in \cite{KlMiPe2005}. In \cite{AlAsRaSha94} it is shown that, with a certain way of coding the moves, a string
which represents an optimal solution is square-free. This line is extended in \cite{AlSa2005}. Another
direction was concerned with various generalizations, such as having
any initial and final configurations \cite{Er85}, assigning colors
to disks (cf. \cite{LuSt2008} and \cite{Mins2005} for recent papers on the subject), and relaxing the placement rule of disks by allowing a disk to be placed on top of a smaller one under
prescribed conditions \cite{DiSo2006,DiSo2007b,DiSo2007}.

A natural extension of the original problem is obtained by adding pegs. One of the earliest versions is ``The Reve's Puzzle" \cite[pp. 1-2]{Dude08}. There it was presented in a limited form: $4$ pegs and specified numbers of disks. The general setup of the problem, with any number $h > 3$ of pegs and any number of disks, was suggested in
\cite{Stew39}, with solutions in \cite{Stew41} and \cite{Frame41},
shown recently to be identical \cite{KlMiPe2002}. An analysis of the
algorithm reveals, somewhat surprisingly, that the solution grows
sub-exponentially, at the rate of $\Theta(\sqrt{n}2^{\sqrt{2n}})$
for $h=4$ (cf. \cite{Stoc94}). The lower bound issue was considered
in \cite{Szeg99} and \cite{ChSh2004}, where it has been shown that the minimal number of moves
grows roughly at the same rate.

An imposition of movement restrictions among pegs generates many
variants, and calls for representing variants by digraphs, where a
vertex designates a peg, and an arc represents the permission to
move a disk in the appropriate direction. In \cite{Atki81,Er84}, the uni-directional cyclic 3-peg variant (\texttt{Cyclic$_{3}$}) has been studied, and the average distance between the nodes -- in \cite{Stoc96}. In \cite{ScGrSm44}, the
``three-in-a-row" arrangement (\texttt{Path$_{3}$}) is
discussed. A unified treatment of all 3-peg variants is given in \cite{Sapir2004}. The (uni-directional) \texttt{Cyclic$_{4}$} is discussed for the first time in \cite{ScGrSm44}, and \cite{Stoc94} studies other $4$-peg variants: \texttt{Star$_{4}$} and \texttt{Path$_{4}$}, presenting a sub-exponential algorithm for
\texttt{Star$_{4}$}.
The \texttt{Cyclic}$_{h}$ for any number of pegs $h \ge 4$ has been studied in \cite{besa2005a} and proved to be exponential for any specified $h$. Identification of the longest task, for certain variants, has been resolved in \cite{BeSa2005b}.

The only requirement for the problem to be solved for any number of disks is that the variant is represented by a strongly-connected directed graph. An interesting line of research has been taken in \cite{Leiss84}, \cite{AzBe2006}, and \cite{AzSoSo2008}, where non-strongly-connected graphs are being studied.

In this paper we study the \texttt{Path$_h$} variant, with a fixed number $h \ge
4$ of pegs, whose complexity issue has been left open. We devise an efficient algorithm which moves a column of $n$ disks between any pair of pegs, and supply an explicit subexponential upper bound on the number of moves, for each $h$.

Notations and definitions are given in Section~\ref{sec2}, the main results in Section~\ref{sec3}, the proof of the 4-peg case in Section~\ref{sec4} and that of the general case in Section~\ref{sec5}.

\section{Preliminaries}~\label{sec2}
We study the \texttt{Path$_h$} (a.k.a. \emph{$h$-in-a-row})
variant, with a fixed number $h \ge
4$ of pegs. We denote the pegs of
\texttt{Path$_h$}, from left to right, by $1,\ldots,{h}$. Let the sizes of the disks be $1, 2, \ldots, n$. For
convenience, we identify the name of a disk with its size.

For the statements and algorithms of the paper, it is required to introduce the notion of a {\it block} --- a set of disks of consecutive sizes. The minimum (respectively, maximum)
size of a disk in a block $B$ is denoted by $B_{\min}$ (resp.,
$B_{\max}$), and the number $B_{\max} - B_{\min} + 1$ of disks in $B$ --- by $|B|$. A block $B$ is \emph{lighter} than
another block $B'$ if $B_{\max} < B'_{\min}$.

A {\it configuration} is a legal distribution of the $n$ disks
among the $h$ pegs. A~{\it perfect configuration} is one in which
all the disks reside on the same peg. Such a configuration is
denoted by $R_{h,i,n}$, where $h$ is the
number of pegs, $i$ the peg holding the disks, and $n$ the number of disks.

For a sequence of moves $M$, henceforth \emph{move-sequence}, we
denote by $M^{-1}$ the reverse move-sequence, comprising the
moves that cause the reverse effect. That is, the order of the moves is reversed and each move of the original sequence is reversed.
Clearly, if applying $M$ to configuration $C_{1}$ results in reaching
configuration $C_{2}$, then applying $M^{-1}$ to $C_{2}$ results in configuration $C_{1}$.
(Note that this is true if and only if the peg structure is a graph; for digraphs in general this is not true.)

A problem instance, henceforth a {\it task}, is given by a pair of configurations, an
{\it initial} configuration $C_1$ and a {\it final} configuration
$C_2$, where we are required to move from $C_1$ to $C_2$ in a minimal
number of moves. The task, as well as a minimal-length solution of it, is denoted by $C_1
\rightarrow C_2$, and the minimum number of moves needed to get from
$C_1$ to $C_2$ is denoted by $|C_1 \rightarrow C_2|$.

In this paper we focus on {\it perfect tasks} --- problem instances whose
initial and final configurations are both perfect. The peg associated
with the initial (respectively, final) configuration of a
perfect task is naturally referred to as the
{\it source} (resp., {\it destination}). Clearly, for any positive
integers $h, n$ and $1 \le i < j \le h$, we have $|R_{h,i,n}
\rightarrow R_{h,j,n}| = |R_{h,j,n} \rightarrow R_{h,i,n}|$. We
shall henceforth restrict our attention in the sequel to
tasks in which the source peg is
to the left (i.e. has a lower peg index) of the destination peg.

For $n \ge 1$, denote by Path($h,n$) the minimal
number of moves which suffices for transferring a block of size $n$ between all pairs
of perfect configurations in \texttt{Path$_h$}, namely,
$${\rm Path}(h,n) = \max_{1 \le i < j \le h} |R_{h,i,n} \rightarrow
R_{h,j,n}| \, .$$

For a real number $x$, let round($x$) be the
 integer closest to $x$ (where round($x$) = $\lceil x \rceil$ for $x = n + 0.5$).
For a pair of positive integers $p$ and $q$, with $p < q$, we denote the
set $\{p,\ldots,q\}$ by $[p,q]$, and $[1, \ldots, q]$ by $[q]$.
In what follows, we do not distinguish between a move-sequence and
an algorithm generating it, if this does not lead to a
misunderstanding.

\section{Main results}~\label{sec3}
The main question the paper addresses is: what is the complexity of {\rm Path($h,n$)}?
\\An upper bound is provided by

\begin{theorem}~\label{tp1} {\rm Path($h,n$)} $\le C_{h}n^{\alpha_{h}}\cdot 3^{\theta_{h}\cdot n^{\frac{1}{h-2}}} ,$  for all $h \ge 3$ and $n$, where:

\hspace{6mm}$\theta_{h} = ((h-2)!)^{\frac{1}{h-2}}$ ,

\hspace{6mm}$\alpha_{h} = \frac{h-3}{h-2}$ ,

\hspace{6mm}$C_{h} = \frac{(h-2) \cdot \delta^{h-3}}{\theta_{h}}$, $\qquad \left(\delta = \frac{11}{3^{2-(1/30)^{1/3}}}\right)$.

In particular, {\rm Path($h,n$)} grows subexponentially as a function of $n$ for $h \ge 4$.
\end{theorem}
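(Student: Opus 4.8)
The plan is to set up a mutual recursion among the perfect tasks of \texttt{Path$_h$} and then solve the resulting recurrence. First I would fix notation: for a block of $n$ disks, let $f_h(n)$ be an upper bound on the cost of moving it one peg to the right (equivalently left) in \texttt{Path$_h$}, and let $g_h(n)$ be an upper bound on the cost of moving it between the two extreme pegs $1$ and $h$ (the longest task, cf.\ \cite{BeSa2005b}). The heart of the argument is a Frame--Stewart-type splitting adapted to the path topology: to move a column of $n$ disks from peg $1$ to peg $h$, choose a threshold $k$, move the top $k$ disks out of the way to some intermediate peg using a \emph{shorter} path (this is where a recursion on $h$ enters — the top $k$ disks only need the sub-path $2,\ldots,h$ or $1,\ldots,h-1$), then shuttle the bottom $n-k$ disks across the full path one step at a time (a recursion on $n$ with the smaller block, at fixed $h$), and finally bring the $k$ disks back on top. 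Carefully bookkeeping which sub-path is used and in which direction yields a system of inequalities of the schematic form
\[
g_h(n) \;\le\; 2\,g_{h-1}(k) \;+\; (\text{small constant})\cdot (h-2)\cdot f_h(n-k) \;+\; g_h(n-k),
\]
together with $f_h(n)\le g_h(n)$ and base cases $g_3(n)=3^n-1$ (the classical \texttt{Path$_3$} bound) and $g_h(0)=0$.

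Next I would unroll the $n$-recursion at fixed $h$. Iterating the bottom-block step with a well-chosen sequence of thresholds $k=k_1,k_2,\ldots$ turns the additive $g_h(n-k)$ terms into a sum $\sum_j c\,(h-2) f_h(n_j)$ plus $2\sum_j g_{h-1}(k_j)$, where the $n_j$ decrease to $0$. The standard optimization (as in the Frame--Stewart analysis, cf.\ \cite{Stoc94,KlMiPe2002}) is to balance the pieces so that all the $g_{h-1}(k_j)$ contributions are of comparable magnitude; this produces the characteristic ``triangular-number'' structure in the exponent. Passing to logarithms and solving the recurrence in the exponent $\log_3 g_h(n) \approx \theta_h n^{1/(h-2)} + \alpha_h \log_3 n + \log_3 C_h$, one checks by induction on $h$ that the exponential rate $\theta_h$ satisfies $\theta_h^{\,h-2} = (h-2)\,\theta_{h-1}^{\,h-3}$ with $\theta_3=1$, which unwinds precisely to $\theta_h = ((h-2)!)^{1/(h-2)}$; the polynomial exponent collects as $\alpha_h = \frac{h-3}{h-2}$; and the constant $C_h$ accumulates a factor of $\delta=\frac{11}{3^{2-(1/30)^{1/3}}}$ at each level of the recursion on $h$, starting from $C_3$, giving the stated $C_h = \frac{(h-2)\delta^{h-3}}{\theta_h}$.

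The main obstacle, and the part that will require real care rather than routine calculation, is choosing the threshold sequence $k_j$ and the number of recursion levels so that the induction on $h$ closes with \emph{exactly} these constants — in particular, controlling the lower-order terms (the $n^{\alpha_h}$ factor and the constant $C_h$) tightly enough that each inductive step does not degrade them. The value $\delta$, with its curious exponent $2-(1/30)^{1/3}$, is clearly the output of optimizing a concrete finite expression arising when one bounds $\sum_j 3^{\theta_h n_j^{1/(h-2)}}$ by a geometric-type series and balances it against the $2g_{h-1}$ terms; pinning down that optimization (and verifying it is valid for \emph{all} $n$, not just asymptotically, which forces the base cases for small $n$ to be handled separately) is where most of the technical work lies. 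Once the clean recursion $\theta_h^{\,h-2}=(h-2)\theta_{h-1}^{\,h-3}$ is established, the final claim — subexponential growth for $h\ge 4$ — is immediate, since $n^{1/(h-2)}=o(n)$ whenever $h-2\ge 2$, so $3^{\theta_h n^{1/(h-2)}}$ and hence the whole bound is $2^{o(n)}$.
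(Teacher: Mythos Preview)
Your proposal has the right high-level shape --- a Frame--Stewart-type split together with a recursion on $h$ --- and your identity $\theta_h^{\,h-2}=(h-2)\,\theta_{h-1}^{\,h-3}$ is exactly what closes the induction on the exponent. But the core recursion you write down does not correspond to a legal algorithm on \texttt{Path$_h$}, and this is a genuine gap rather than a matter of bookkeeping.

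The issue is that you have the roles of the two blocks reversed. You propose to park the top $k$ (small) disks on an intermediate peg using a sub-path of length $h-1$, and then ``shuttle the bottom $n-k$ disks across the full path,'' charging $g_h(n-k)$ for the latter. But once the small disks occupy some peg $j\in\{2,\ldots,h-1\}$, the large disks \emph{cannot pass peg $j$}: a larger disk may not sit on a smaller one, so peg $j$ is unavailable to them, and on a path there is no way around. Hence the bottom block never has access to all $h$ pegs, and the term $g_h(n-k)$ is not realizable. The asymmetry goes the other way: it is the \emph{small} block that may freely use all $h$ pegs (it can sit atop anything), while the \emph{large} block is confined to a shorter sub-path.

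Accordingly, the recursion that actually drives the proof is of the form
\[
F(h,n)\ \le\ 3\,F(h,\widetilde n_1)\ +\ \tfrac{11}{9}\,F(h-1,\,n-\widetilde n_1),
\]
where $\widetilde n_1$ counts the \emph{smallest} disks (moved three times across the full path) and $n-\widetilde n_1$ counts the \emph{largest} (handled on \texttt{Path$_{h-1}$}). The coefficient $3$, rather than your $2$, is not slack: on a path, after the small block has been pushed to peg $h$ and the large disks advanced, the small block must be brought all the way back to peg $1$ before the large disks can reach peg $h$, and then forward again --- a three-phase Spread/Reverse/Accumulate scheme, not a two-phase there-and-back. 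The factor $\tfrac{11}{9}$ is the cost of the one-peg shifts of the intermediate blocks during the Reverse phase, and it is precisely the pair $(3,\tfrac{11}{9})$ that, after bounding $\theta_{h-1}m^{1/(h-3)}\le \theta_h n^{1/(h-2)}+(1/30)^{1/3}$, produces $\delta=\tfrac{11}{9}\cdot 3^{(1/30)^{1/3}}=\tfrac{11}{3^{2-(1/30)^{1/3}}}$. With your inverted recursion neither the $3$ nor the $\tfrac{11}{9}$ appears, so even if one patched the legality problem the stated constants would not fall out.
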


Of course, as a lower bound for {\rm Path($h,n$)} one may use any lower bound for the number of moves required to move a tower of size $n$ from one peg to another over the complete graph on $h$ vertices. By \cite{ChSh2004}, such a lower bound is given by $2^{(1+o(1))(n(h-2)!)^{\frac{1}{h-2}}}$, which is ``not very far" from our upper bound for {\rm Path($h,n$)}.

The following theorem identifies the hardest perfect task for the particular case $h=4$. It also provides a tighter upper bound for
\texttt{Path$_4$} than the one given in Theorem~\ref{tp1}.
\begin{theorem}~\label{tp2}
For every $n \ge 1$:
\begin{description}
\item {\rm(a)} $|R_{4,i,n} \rightarrow R_{4,j,n}| ~<~ |R_{4,1,n} \rightarrow R_{4,4,n}| \,\, for \,\, 1 \le i < j \le 4, (i,j) \neq (1,4)$.
In particular, {\rm Path($4,n$)} = $|R_{4,1,n} \rightarrow R_{4,4,n}|$.
\item {\rm(b)} {\rm Path($4,n$)} $< 1.6 \sqrt{n} 3^{\sqrt{2n}}$.
\end{description}
\end{theorem}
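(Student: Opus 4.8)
The plan is to prove both parts simultaneously by setting up the recursion that underlies the \texttt{Path$_4$} algorithm and then identifying which perfect task maximizes it. For part~(a), the key observation is that the six perfect tasks on four pegs — $(1,2),(2,3),(3,4),(1,3),(2,4),(1,4)$ — are not independent: by the symmetry $|R_{4,i,n}\rightarrow R_{4,j,n}| = |R_{4,5-i,n}\rightarrow R_{4,5-j,n}|$ (reflecting the path left-to-right) it suffices to compare $(1,2)$, $(2,3)$, $(1,3)$ and $(1,4)$. First I would write down, for each of these tasks, the natural recursive strategy: to move a block of $n$ disks we split it into a bottom block of size $n-k$ and a top block of size $k$, recursively relocate the top block out of the way to some peg, move the bottom $n-k$ disks (which for the bottom block is itself a perfect subtask, possibly an easier one because fewer pegs are ``blocked''), and restore the top block. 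Each such step yields an inequality of the shape
\[
|R_{4,i,n}\rightarrow R_{4,j,n}| \;\le\; \min_{1\le k\le n}\Bigl(2\,T(k) + |R_{4,i',n-k}\rightarrow R_{4,j',n-k}|\Bigr),
\]
where $T(k)$ is the cost of some auxiliary perfect task on a smaller block. The heart of part~(a) is then a monotonicity/induction argument: one shows, by induction on $n$, that $|R_{4,1,n}\rightarrow R_{4,4,n}|$ strictly dominates every other perfect task — intuitively because moving from peg~1 to peg~4 forces every disk to traverse the entire path and never lets any intermediate peg serve ``for free'' as a dump that is already on the correct side.

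For part~(b), I would extract from the optimal recursion for the $(1,4)$ task a clean scalar recurrence. The move from peg~1 to peg~4 decomposes as: move the top $k$ disks from $1$ to some peg, shuttle the bottom $n-k$ disks rightward, and move the top block onto peg~4; analyzing the costs carefully one expects a recurrence roughly of the form $f(n) \le \min_{k}\bigl(3 f(k) + (\text{linear in }n-k)\bigr)$ or $f(n)\le \min_k\bigl(2f(k)+f(n-k)+O(1)\bigr)$, whose solution is of order $\sqrt{n}\,3^{\sqrt{2n}}$ (this is the familiar Frame–Stewart-type asymptotics, here with base $3$ because three auxiliary relocations are needed per layer in the path geometry). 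With the recurrence in hand, the remaining work is to choose the layer sizes $k$ to be the ``triangular'' split $k\approx n - \sqrt{2n}$, sum the geometric-like series, track the constant, and verify $f(n) < 1.6\sqrt{n}\,3^{\sqrt{2n}}$ for all $n\ge 1$ — checking the small cases $n=1,2,\dots$ by hand to seed the induction and then running the induction step for large $n$ where the asymptotic constant has slack below $1.6$.

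The main obstacle, I expect, is part~(a): proving that $(1,4)$ is \emph{strictly} the hardest among all six perfect tasks and \emph{all} $n$, rather than merely the asymptotically hardest. The difficulty is that the recursive upper bounds for the various tasks interleave — the bound for $(1,3)$ calls $(1,4)$-type or $(2,3)$-type subtasks on smaller blocks, and vice versa — so one cannot compare them task-by-task in isolation; instead one needs a simultaneous induction over all six quantities at once, with a carefully chosen ordering of the tasks by difficulty that is preserved at every level of the recursion. Establishing that the optimal splitting parameter for $(1,4)$ is no smaller than the optimal parameter for the dominated tasks (so that the dominance is not destroyed when we pass to subproblems) is the technical crux. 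Secondarily, pinning the constant precisely at $1.6$ rather than some larger absolute constant requires the small-$n$ base cases to be computed exactly and the asymptotic estimate to be tight enough; getting the error terms in the $\sqrt{n}$ prefactor under control will need a somewhat delicate summation, but it is routine once the recurrence and the layer schedule are fixed.
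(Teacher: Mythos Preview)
Your plan for part~(b) is essentially on track: the paper does derive a recurrence $T(n) = 3T(n-m) + \frac{7}{6}\cdot 3^m - \frac{1}{2}$ with $m = \mathrm{round}(\sqrt{2n})$, verifies small cases by hand, and proves $T(n) < 1.6\sqrt{n}\,3^{\sqrt{2n}}$ by induction. One correction: the non-recursive term is \emph{exponential} in the size of the bottom block, because those $m-1$ disks are moved using only three consecutive pegs (i.e.\ \texttt{Path$_3$}, costing $\Theta(3^m)$); neither of the two candidate forms you wrote --- ``$3f(k)+(\text{linear in }n-k)$'' or ``$2f(k)+f(n-k)+O(1)$'' --- matches this.

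For part~(a), however, your proposed strategy has a genuine gap. You plan to write each perfect task as an inequality $|R_{4,i,n}\to R_{4,j,n}| \le \min_k(\cdots)$ coming from a recursive algorithm, and then compare these bounds by simultaneous induction. But upper bounds for all six tasks --- including $(1,4)$ --- cannot by themselves show that $|R_{4,1,n}\to R_{4,4,n}|$ is \emph{strictly larger} than the others: for that you need a \emph{lower} bound on the $(1,4)$ task. Your ``simultaneous induction with a carefully chosen ordering'' does not supply one, and the worry you flag about optimal splitting parameters is a symptom of this, not the crux. The paper sidesteps the whole issue with a short direct argument: for general $h$, if $C$ is any configuration supported on pegs $1,\dots,h-2$, then $|C\to R_{h,h-2,n}|$ and $|C\to R_{h,h-1,n}|$ are each strictly less than $|C\to R_{h,h,n}|$. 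Given \emph{any} move-sequence $M$ accomplishing $C\to R_{h,h,n}$, look at the configuration just before the last move of disk $n$; the $n-1$ smaller disks then sit on pegs $1,\dots,h-2$, so by induction the tail can be replaced by a strictly shorter one ending at $R_{h,h-2,n-1}$, and the earlier moves of disk $n$ can be truncated once it first reaches peg $h-2$. Specializing to $h=4$ and taking $C$ to be $R_{4,1,n}$ and $R_{4,2,n}$ immediately gives all the required strict inequalities, with no comparison of recursive schemes at all.
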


\section{Proof of Theorem \ref{tp2}} \label{sec4}
\subsection{On the relation between various tasks in \texttt{Path$_4$}}
We start with a result of some independent interest, which holds for general $h$.
\begin{lemma}\label{rel4path01}
Let $C$ be a configuration with $n \ge 1$ disks, arranged
arbitrarily on pegs $1,\ldots,h-2$, with pegs $h-1$ and $h$ empty. Then:
$$|C \rightarrow R_{h,h-2,n}| < |C \rightarrow R_{h,h,n}|, \qquad |C \rightarrow R_{h,h-1,n}| < |C \rightarrow R_{h,h,n}| \, .$$
\end{lemma}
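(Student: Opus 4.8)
\medskip
\noindent\textbf{Proof plan.}
I would prove the lemma by induction on $n$, with $h$ fixed and $C$ ranging over all configurations supported on pegs $1,\ldots,h-2$. The base case $n=1$ is immediate: the single disk, lying on some peg $i\le h-2$, needs exactly $h-2-i<h-1-i<h-i$ moves to reach pegs $h-2$, $h-1$ and $h$. For the inductive step, since $|C_1\to C_2|=|C_2\to C_1|$, it suffices to fix an optimal move-sequence $M$ solving $R_{h,h,n}\to C$ and to extract from it a strictly shorter solution of $R_{h,h-1,n}\to C$ and of $R_{h,h-2,n}\to C$; the shortening will be certified by invoking the lemma for $n-1$ disks.

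The one preliminary I would record is that in an optimal solution the largest disk $n$ never performs two consecutive moves $a\to b$ then $b\to a$: deleting both such moves keeps every move performed in between legal, because disk $n$, always sitting at the bottom of its peg, obstructs no smaller disk, and this saves two moves. Hence in $M$ disk $n$ travels monotonically from peg $h$ down to its destination peg $f\le h-2$ in $C$, and $M$ factors as $N_1\cdot\beta_1\cdot N_2\cdot\beta_2\cdot N_3\cdots$, where $\beta_1,\beta_2,\ldots$ are the one-step moves of disk $n$ (so $\beta_1$ moves it $h\to h-1$, $\beta_2$ moves it $h-1\to h-2$, and so on) and each $N_j$ is a possibly empty move-sequence of the smaller disks, with disk $n$ lying inert at the bottom of a fixed peg throughout $N_j$. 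Let $C^{\#}$ be the configuration of the $n-1$ smaller disks reached after $N_1$. Just before $\beta_1$ disk $n$ is alone on the leaf peg $h$ and peg $h-1$ is empty, so $C^{\#}$ is supported on pegs $1,\ldots,h-2$; and since deleting the inert disk $n$ from $N_1$ turns $N_1$ into a legal transfer of disks $1,\ldots,n-1$ from $R_{h,h,n-1}$ to $C^{\#}$, we get $|N_1|\ge|C^{\#}\to R_{h,h,n-1}|$.

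To solve $R_{h,h-1,n}\to C$: first transfer the $n-1$ smaller disks from their stack above disk $n$ on peg $h-1$ to $C^{\#}$ — an inert disk $n$ at the bottom of peg $h-1$ obstructs nothing, so this costs $|C^{\#}\to R_{h,h-1,n-1}|$ — reaching the configuration "$n$ on peg $h-1$, smaller disks in $C^{\#}$", which is exactly where $M$ is just after $\beta_1$; then append the portion of $M$ after $\beta_1$, which carries that configuration to $C$. The resulting length is $|C^{\#}\to R_{h,h-1,n-1}|+(|M|-|N_1|-1)$, and since the induction hypothesis gives $|C^{\#}\to R_{h,h-1,n-1}|<|C^{\#}\to R_{h,h,n-1}|\le|N_1|$, this is $<|M|$; hence $|C\to R_{h,h-1,n}|<|C\to R_{h,h,n}|$.

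To solve $R_{h,h-2,n}\to C$ the same device works with one twist. First transfer the smaller disks from their stack above disk $n$ on peg $h-2$ to $C^{\#}$ (cost $|C^{\#}\to R_{h,h-2,n-1}|$, disk $n$ inert at the bottom of peg $h-2$), reaching "$n$ on peg $h-2$, smaller disks in $C^{\#}$". Now $N_2$ is a legal move-sequence of the smaller disks carried out, in $M$, with disk $n$ inert at peg $h-1$; but disk $n$ being the largest and stationary, its peg never affects the legality of any smaller-disk move, so $N_2$ remains legal verbatim, with the same effect on the smaller disks, when disk $n$ instead lies inert at peg $h-2$. Replaying $N_2$ therefore reaches "$n$ on peg $h-2$, smaller disks in $C^{\#\#}$", where $C^{\#\#}$ is the smaller-disk configuration right after $\beta_2$ in $M$; appending the portion of $M$ after $\beta_2$ carries this to $C$. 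The resulting length is $|C^{\#}\to R_{h,h-2,n-1}|+|N_2|+(|M|-|N_1|-1-|N_2|-1)$, which is $<|M|$ since $|C^{\#}\to R_{h,h-2,n-1}|<|C^{\#}\to R_{h,h,n-1}|\le|N_1|$ by the induction hypothesis; hence $|C\to R_{h,h-2,n}|<|C\to R_{h,h,n}|$, completing the induction. The delicate point — and what I expect to be the real obstacle — is precisely this last case: the naive comparison would use the smaller-disk configuration at the instant disk $n$ leaves peg $h-2$, but that configuration may occupy the leaf peg $h$ (free once disk $n$ has vacated it), to which the induction hypothesis does not apply. Anchoring the reduction at $C^{\#}$ — taken before disk $n$'s first move, hence necessarily on pegs $1,\ldots,h-2$ — and sliding disk $n$ down to peg $h-2$ "underneath" the block $N_2$ is what keeps the induction hypothesis usable, and getting this bookkeeping exactly right is the subtle part.
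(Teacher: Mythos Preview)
Your proof is correct and follows essentially the same inductive scheme as the paper: cut a solution at the move of disk $n$ adjacent to the peg-$h$ end, observe that the small-disk configuration there is supported on pegs $1,\ldots,h-2$, and apply the induction hypothesis to the segment between that configuration and the perfect pile on peg $h$. The paper's execution is a bit leaner: it works in the forward direction $C\to R_{h,h,n}$, cuts at the \emph{last} move of disk $n$, and---instead of invoking optimality to get monotonicity and then sliding $N_2$ beneath a relocated disk $n$---simply deletes from $M'$ every move of disk $n$ made after it first reaches peg $h-2$ (respectively $h-1$), which treats both inequalities uniformly and needs no no-backtrack preliminary.
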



\begin {proof}
We detail the proof for $|C \rightarrow
R_{h,h-2,n}| < |C \rightarrow R_{h,h,n}|$. The proof for the second inequality is similar.

The proof is by induction on $n$. The basis $n=1$ is trivial. Let $n \ge 2$, assume that the statement holds for up to $n - 1$ disks, and let $C$ be a configuration as in the statement of the lemma and $M$ a move-sequence transferring from $C$ to $R_{h,h,n}$. Before the last move
of disk $n$ (to peg $h$), a configuration $C'$, in which all $n-1$
disks $1, 2, \ldots, n-1$ are distributed among pegs $1,\ldots,h-2$, is reached.
Let $M'$ (respectively, $M''$) be the subsequence of $M$, consisting of
all moves that come before (resp., after) the last move of disk $n$.
Notice that $M''$ transfers from $C'$ (considered as a configuration of $n-1$ disks)  to $R_{h,h,n-1}$.
By the induction hypothesis, there exists a move-sequence $M''_{h-2}$ that transfers
from $C'$ to $R_{h,h-2,n-1}$, which
is strictly shorter than $M''$. Let $M'_{h-2}$ be the move-sequence obtained from $M'$ by omitting all moves
disk $n$ makes after reaching peg $h-2$ for the first
time. Concatenating $M'_{h-2}$ with $M''_{h-2}$, we obtain a legal move-sequence, strictly shorter than
$M$, transferring from $C$ to $R_{h,h-2,n}$. The required
result follows. 
\end {proof}

Due to symmetries, there are actually only four essentially distinct perfect tasks in \texttt{Path$_4$}: $R_{4,1,n} \rightarrow
R_{4,2,n}$, $R_{4,1,n} \rightarrow R_{4,3,n}$, $R_{4,1,n}
\rightarrow R_{4,4,n}$, and $R_{4,2,n} \rightarrow R_{4,3,n}$. By
Lemma~\ref{rel4path01}, taking $h=4$ and $C$ to be various perfect configurations, we obtain for any $n \ge 1$
\begin {itemize}
\item $|R_{4,1,n} \rightarrow R_{4,2,n}| ~<~ |R_{4,1,n} \rightarrow R_{4,4,n}|$.
\item $|R_{4,2,n} \rightarrow R_{4,3,n}|  ~<~ |R_{4,2,n} \rightarrow R_{4,4,n}| ~=~ |R_{4,1,n} \rightarrow R_{4,3,n}|
~<~ |R_{4,1,n} \rightarrow R_{4,4,n}|$.
\end {itemize}

These inequalities establish part (a) of Theorem~\ref{tp2}.

In Table~1 we present the (distinct) numbers $|R_{4,i,n} \rightarrow R_{4,j,n}|$ for $1 \le n \le 11$. The entries have been calculated by finding the distance between the vertices $R_{4,i,n}$ and $R_{4,j,n}$ in the graph of all configurations of $n$ disks on \texttt{Path$_{4}$} using breadth-first search.

\begin{table}[ht]~\label{tab1}
$$\begin{tabular}{|c|r|r|r|r|r|}
\hline
       & \multicolumn{5}{c|}{tasks} \cr
\hline
 disks \vrule height 16pt depth 6pt width 0pt & $2 \rightarrow 3$ & $1 \rightarrow 2$ & $1 \rightarrow 3$ & $1 \rightarrow 4$ & $\frac{1 \rightarrow 4}{\sqrt{n}3^{\sqrt{2n}}}$ \vrule depth 10pt width 0pt \cr
\hline
      &     &     &     &     & \cr
  1   &   1 &   1 &   2 &   3 & 0.634                                  \cr
  2   &   4 &   4 &   6 &  10 & 0.786                                  \cr
  3   &   7 &   9 &  12 &  19 & 0.744                                  \cr
  4   &  14 &  18 &  22 &  34 & 0.760                                  \cr
  5   &  23 &  29 &  36 &  57 & 0.790                                  \cr
  6   &  34 &  44 &  54 &  88 & 0.799                                  \cr
  7   &  53 &  69 &  78 & 123 & 0.762                                  \cr
  8   &  78 &  96 & 112 & 176 & 0.768                                  \cr
  9   & 105 & 133 & 158 & 253 & 0.798                                  \cr
 \!\!\!10   & 138 & 182 & 212 & 342 & 0.795                                  \cr
 \!\!\!11   & 187 & 241 & 272 & 449 & 0.783                                  \cr
\hline
\end{tabular}$$
\caption{The minimal numbers of moves for the 4 different perfect tasks in \texttt{Path$_{4}$}}
\end{table}

The table prompts
\begin{question}
{\rm Is it the case that $|R_{4,1,n} \rightarrow R_{4,2,n}| < |R_{4,1,n} \rightarrow R_{4,3,n}|$ and $|R_{4,2,n} \rightarrow R_{4,3,n}| < |R_{4,1,n} \rightarrow R_{4,2,n}|$ \, for all $n \ge 3$?}
\end{question}
Both of these inequalities seem intuitively quite plausible.

\subsection{Upper bound for Path($4,n$)} \label{h4}
In this subsection we present the algorithm \texttt{FourMove} for moving a block $B$ of
size $n$ from peg 1 to peg 4 in \texttt{Path$_4$}, requiring no more than $1.6 \sqrt{n}
3^{\sqrt{2n}}$ moves. By Theorem~\ref{tp2}(a), this will imply Theorem~\ref{tp2}(b).
The description of \texttt{FourMove} is given in
Algorithm~\ref{alg041}.
\begin{algorithm}
\caption{\texttt{FourMove}($B$)}~\label{alg041}
\begin{algorithmic}
\STATE{\hspace{-12pt}/* $B$ is a block of disks, partitioned into $B_{s} \cup B_{l} \cup \{B_{\max}\}$, where $B_{s}$ is the set of smallest $\,\,\,$ */}
\STATE{\hspace{-12pt}/* disks, $B_{l}$ --- the larger, $\{B_{\max}\}$ --- the largest. \texttt{ThreeMove($B$,$s$,$d$,$a$)} returns the shortest $\,\!$*/}
\STATE{\hspace{-12pt}/* move-sequence for moving $B$ from $s$ to $d$ using $a$, referring to the graph induced by these $\,\!$ */}
\STATE{\hspace{-12pt}/* three vertices as \texttt{Path$_3$}. A move of disk~$n$ from peg~$i$ to peg~$i+1$ is
denoted by $t_{i, i+1,n}$. $\,\,\,\,\,\!$ */}
\STATE{\hspace{-12pt}/* The `*' denotes concatenation of move-sequences. $\qquad\qquad\qquad\qquad\qquad\qquad\qquad\qquad\qquad \!\! $*/}
\STATE {$M \leftarrow [\,]$ \qquad /* an empty sequence */}
\IF{$B \ne \emptyset$}
 \STATE{$n \leftarrow |B|$}
 \STATE{$m \leftarrow {\rm round}(\sqrt{2n})$}
 \STATE{$B_{s} \leftarrow [B_{\min},B_{\max}-m]$}
 \STATE{$B_{l} \leftarrow [B_{\max}-m+1,B_{\max}-1]$}
 \STATE{$M_{s} \leftarrow \texttt{FourMove}(B_{s})$}
 \STATE {\bf /* Spread: */}
 \STATE{$M \leftarrow M_{s} * $ \texttt{ThreeMove}($B_{l},1,3,2$) * $t_{1,2,n}$}
 \STATE {\bf /* Circular shift: */}
 \STATE{$M \leftarrow M * M_{s}^{-1} * $ \texttt{ThreeMove}($B_{l},3,4,2$) * $t_{2,3,n}$ * \texttt{ThreeMove}($B_{l},4,2,3$)}
 \STATE {\bf /* Accumulate: */}
 \STATE{$M \leftarrow M * $ $t_{3,4,n}$ * \texttt{ThreeMove}($B_{l},2,4,3$) * $M_{s}$}
\ENDIF
\STATE {return $M$}
\end{algorithmic}
\end{algorithm}

Prior to its main stages, \texttt{FourMove} partitions $B$ into three: a block containing the smallest disks, a block containing the larger ones, and a block containing a single disk -- the largest one. These blocks are denoted $B_{s}, B_{l}, \{B_{\max}\}$ respectively, with $m=|B_{l} \bigcup \{B_{\max}\}|$. Thus $B_{s} = [B_{\min},B_{\max}-m]$ and $B_{l} = [B_{\max}-m+1,B_{\max}-1]$.
In the three principal stages that follow: Spread, Circular shift and Accumulate, the moves are done based on these blocks. In Spread, $B_{s}$ is transferred to the farthest peg -- number~4, $B_{l}$ to peg~3, $\{B_{\max}\}$ to peg~2. In Accumulate, the opposite is done: these blocks are gathered on the destination peg. In-between the algorithm performs the Circular shift stage, whose role is to reverse the order of the blocks, so that it will be possible to perform the Accumulate stage. It is easy to verify that, as the execution of the algorithm terminates, all the blocks are legally gathered on the destination peg~$4$, as required.

The \texttt{ThreeMove} procedure in Algorithm~\ref{alg041}
produces move-sequences for $B_{l}$ using only
three (consecutive) pegs, which is exactly as moving it in
\texttt{Path$_3$}. To this end, we use the algorithm of
\cite{ScGrSm44}, which transfers a block in a minimal number of
moves between any two pegs in \texttt{Path$_3$}, requiring $3^{n}-1$
moves to transfer $n$ disks between the two farthest pegs, and half
that number of moves between neighboring pegs.

Denote by $T(n)$ the number of moves required by 
\texttt{FourMove} for a block of size $n$, and define $T(0) = 0$. Each of the
three recursive invocations of the algorithm \texttt{FourMove} with
$B_{s}$ requires $T(n-m)$ moves. Observe that, for a positive integer $n$, we have $1 \le m = $ round($\sqrt{2n}) \le n$. Employing the abovementioned results regarding the number of moves required by
\texttt{ThreeMove}, it is easy to see that the
total number of moves required by
$B_{s}$ is  $3(3^{m-1}-1) + \frac{1}{2}(3^{m-1}-1)$. Finally, $\{B_{\max}\}$ 
performs 3 moves.
Altogether, for $n \ge 1$ we have:
\begin{eqnarray}~\label{eqn01}
\begin{tabular}{rcl}
 $T(n)$ & = & $3 \cdot T(n-m) + \frac{7}{2} \cdot (3^{m-1}-1) + 3 $ \cr
        &   & \cr
        & = & $3\cdot T(n-m) + \frac{7}{6} \cdot 3^{m} - \frac{1}{2}$,
\end{tabular}
\end{eqnarray}
where $~m~=~{\rm round}(\sqrt{2n})$.

Next, we prove by induction that $T(n) < 1.6 \sqrt{n} \cdot
3^{\sqrt{2n}}$, implying the required result. For the induction basis, we note that the inequality has been verified manually for all values of $n \le 8$. Let $n \ge 9$, and assume that the inequality holds when $n$ is replaced by a smaller integer. To prove it for $n$, denote first $\beta = m - \sqrt{2n}$. (Clearly, $-\frac{1}{2} < \beta < \frac{1}{2}$.) Note that
\begin{eqnarray}~\label{eqn02}
\sqrt{1-\frac{m}{n}} ~<~ 1- \frac{1}{\sqrt{2n}}-\frac{2\beta+1}{4n},
\end{eqnarray}
which can be verified by squaring both sides of the inequality, noting that the right-hand side is positive for $n \ge 2$. Thus, by the induction hypothesis and (\ref{eqn02}),
\begin{eqnarray*}
T(n) & = & 3 \cdot T(n-m) + \frac{7}{6} \cdot 3^{m} - \frac{1}{2} \cr
& & \cr
     & < & 3 \cdot 1.6 \sqrt{n-m} \cdot 3^{\sqrt{2(n-m)}} +
\frac{7}{6} \cdot 3^{m} \cr
& & \cr
     & = & 3 \cdot 1.6 \sqrt{n} \sqrt{1-\frac{m}{n}} \cdot 3^{\sqrt{2n}
\sqrt{1-\frac{m}{n}}} + \frac{7}{6} \cdot 3^m \cr
& & \cr
     & < & 3 \cdot 1.6 \sqrt{n} \left(1- \frac{1}{\sqrt{2n}}
-\frac{2\beta+1}{4n}\right) 3^{\sqrt{2n} \left(1-
\frac{1}{\sqrt{2n}} -\frac{2\beta+1}{4n}\right)} +
\frac{7}{6} \cdot 3^{\sqrt{2n}+\beta} \cr
& & \cr
     & = & 3^{\sqrt{2n}} \left[1.6 \sqrt{n}
\left(1- \frac{1}{\sqrt{2n}} -\frac{2\beta+1}{4n}\right)
3^{-\frac{2\beta+1}{2\sqrt{2n}}} + \frac{7}{6} \cdot
3^{\beta}\right] \cr
& & \cr
     & = & 3^{\sqrt{2n}} \left[1.6 \sqrt{n} \left(1-
\frac{1}{\sqrt{2n}} -\frac{2\beta+1}{4n}\right) e^{-\frac{2\beta+1}{2\sqrt{2n}}\ln 3} + \frac{7}{6} \cdot
3^{\beta}\right].
\end{eqnarray*}
Notice that the term $-\frac{2\beta+1}{2\sqrt{2n}}\ln 3$ is negative for $\beta > -
\frac{1}{2}$. Hence, using the fact that $e^x \le 1+x +
\frac{1}{2}x^2$ for $x<0$, we obtain

$\begin{array}{llll}
\frac{T(n)}{1.6 \cdot 3^{\sqrt{2n}}} & < & & \sqrt{n} \left(1-
\frac{1}{\sqrt{2n}} -\frac{2\beta+1}{4n}\right) \left(1 -
\frac{2\beta+1}{2\sqrt{2n}}\ln 3 + \frac{(2\beta+1)^{2}\ln^{2}3}{16n}\right) + \frac{35}{48} \cdot 3^{\beta} \cr
& & & \cr
& = & & \left(\sqrt{n} - \frac{1}{\sqrt{2}} -
\frac{(2\beta+1)\ln 3}{2\sqrt{2}}\right) \cr
& & & \cr
& & + & \left(-\frac{2\beta+1}{4\sqrt{n}} + \frac{
(2\beta+1)\ln 3}{4\sqrt{n}} + \frac{(2\beta+1)^{2} \ln^{2}3}{16
\sqrt{n}}\right) \cr
& & & \cr
& & + & \left(\frac{
(2\beta+1)^2 \ln 3}{8\sqrt{2} \cdot n}
 - \frac{(2\beta+1)^{2}\ln^{2}3}{16
\sqrt{2}\cdot n} - \frac{(2\beta+1)^{3}\ln^{2}3}{64n \sqrt{n}} \right) + \frac{35}{48}
\cdot 3^{\beta} \cr


 &   &   & \cr
 & = & & \left(\sqrt{n} - \frac{1}{\sqrt{2}} -
\frac{(2\beta+1)\ln 3}{2\sqrt{2}}\right) \cr
& & & \cr
& & + & \left(-\frac{2\beta+1}{4\sqrt{n}} + \frac{
(2\beta+1)\ln 3}{4\sqrt{n}} + \frac{(2\beta+1)^{2}\ln^{2}3}{16
\sqrt{n}} + \frac{(2\beta+1)^{2}\ln^{2}3}{16\sqrt{2} \cdot n}\right) \cr
& & & \cr
& & + & \left( \frac{
(2\beta+1)^2 \ln 3}{8\sqrt{2} \cdot n}
 - \frac{(2\beta+1)^{2}\ln^{2}3}{8
\sqrt{2} \cdot n} - \frac{(2\beta+1)^{3}\ln^{2}3}{64n \sqrt{n}} \right) + \frac{35}{48}
\cdot 3^{\beta}.
\end{array}$\\

Observe that
$$\frac{(2\beta+1)^2 \ln 3}{8\sqrt{2} \cdot n}
 - \frac{(2\beta+1)^{2}\ln^{2}3}{8\sqrt{2}\cdot n} -
 \frac{(2\beta+1)^{3}\ln^{2}3}{64 \cdot n \sqrt{n}} < 0$$ \\
for $\beta > -\frac{1}{2}$, and therefore

$\begin{array}{llll}
\frac{T(n)}{1.6 \cdot 3^{\sqrt{2n}}} & < & & \left(\sqrt{n} -
\frac{1}{\sqrt{2}} - \frac{(2\beta+1) \ln 3}{2\sqrt{2}}\right) \cr
& & & \cr
& & + & \left(-\frac{2\beta+1}{4\sqrt{n}} + \frac{
(2\beta+1)\ln 3}{4\sqrt{n}} + \frac{(2\beta+1)^{2}\ln^{2}3}{16\sqrt{n}} + \frac{(2\beta+1)^{2}\ln^{2}3}{16 \sqrt{2}\cdot n}\right) + \frac{35}{48} \cdot 3^{\beta} \cr
& & & \cr
& = & & \left(\sqrt{n} - \frac{1}{\sqrt{2}} - \frac{
(2\beta+1)\ln 3}{2\sqrt{2}} \right) \cr
& & & \cr
& & + & \frac{2\beta +1}{16 \sqrt{n}}
\left(4(\ln 3 -1) + (2\beta +1)\ln^2 3 \left(1 + \frac{1}{\sqrt{2n}}\right) \right) + \frac{35}{48} \cdot
3^{\beta}.
\end{array}$ \\

Now for $\beta > -\frac{1}{2}$ the expression
\begin {eqnarray} \label{forimp}
 \frac{2\beta +1}{16 \sqrt{n}} \left(4(\ln 3 -1) + (2\beta +1)\ln^2 3 \left(1 + \frac{1}{\sqrt{2n}}\right) \right)
\end {eqnarray}
increases as a function of $\beta$ and decreases as a function of $n$. Hence its maximal value in the range $-\frac{1}{2} < \beta \le \frac{1}{2}, n \ge 9$, is obtained for $\beta=\frac{1}{2}, n=9$. Since its value at that point is less than $0.15$, we have


\begin {eqnarray} \label{eq:fi}
\frac{T(n)}{1.6 \cdot 3 ^{\sqrt{2n}}} &<& \sqrt{n} +
\left(-\frac{1}{\sqrt{2}} - \frac{(2\beta+1)\ln 3}{2\sqrt{2}} + 0.15 + \frac{35}{48} \cdot
3^{\beta}\right).
\end {eqnarray}
It is easy to verify that for $-\frac{1}{2} < \beta \le
\frac{1}{2}$,
\begin{eqnarray} \label{eq:fi2}
f(\beta) = -\frac{1}{\sqrt{2}} - \frac{(2\beta+1)\ln 3}{2\sqrt{2}} + 0.15 + \frac{35}{48} \cdot
3^{\beta} \end{eqnarray} is maximized at $\beta = \frac{1}{2}$, and
$f(\frac{1}{2}) < 0$. Consequently, the right-hand side of
(\ref{eq:fi}) is smaller than $\sqrt{n}$, and we are
done.

\subsubsection{\large A better upper bound for Path($4,n$)}
We reduced the problem of upper bounding Path($4,n$) to
the problem of upper bounding the following recurrence formula,
which might be of independent interest:
\begin{eqnarray*}
T(n)  & = & \min_{1\le m \le n} \left( 3 \cdot T(n-m) + \frac{7}{6} \cdot 3^{m} - \frac{1}{2} \right) \, .
\end{eqnarray*}
We obtained an upper bound of $1.6 \cdot \sqrt{n} \cdot
3^{\sqrt{2n}}$ for this recurrence formula, which is tight up to the
leading constant $1.6$.

One way to decrease the  constant $1.6$ is to show, using a computer
program, that a better upper bound holds for all values of $n \le
n'$, for some huge integer $n'$. This will serve as a significantly
more elaborate induction basis than the one that we use above (i.e.,
$n \le 8$), and consequently, it would suffice to prove the
induction step for $n > n'$ only. The maximum value of the function
$g(n,\beta)$ defined in (\ref{forimp}) for a huge integer $n > n'$
and $-\frac{1}{2} < \beta \le \frac{1}{2}$ is some tiny number
$\varepsilon = \varepsilon(n)$, and so, substituting $0.15$ with
$\varepsilon$ in (\ref{eq:fi2}) yields:
$$f'(\beta) = -\frac{1.6}{\sqrt{2}} - \frac{1.6 \cdot \ln 3 (2\beta+1)}{2\sqrt{2}} + 1.6 \cdot
\varepsilon + \frac{7}{6} \cdot 3^{\beta},$$ with $f(\beta)-f'(\beta) =
1.6(0.15-\varepsilon) \approx 1.6 \cdot 0.15$. The difference
$1.6(0.15-\varepsilon)$, between $f(\beta)$ and $f'(\beta)$, for a sufficiently small $\varepsilon$, enables us to
decrease the leading constant $1.6$ to approach $1.365$, yielding an
upper bound that approaches $1.365 \cdot \sqrt{n} \cdot
3^{\sqrt{2n}}$.

A more involved method for decreasing the above constant is to
choose another value for $m$. For technical convenience, we fixed $m
= {\rm round}(\sqrt{2n})$, but this choice of $m$ is inherently
suboptimal. By following the method outlined in the previous
paragraph, and setting $m =
\left\lfloor\sqrt{2n}+\left(1-\frac{1}{\ln 3}\right) \right\rfloor$,
one can achieve a constant that approaches $1.105$, yielding an
upper bound of approximately $1.105 \cdot \sqrt{n} \cdot
3^{\sqrt{2n}}$.

\renewcommand{\arraystretch}{2.4}

\section{Proof of Theorem \ref{tp1}}~\label{sec5}
The proof of Theorem \ref{tp1} is organized as follows. Generally, we would like to show how one can move a column of $n$ disks from any source peg~$s$ to any destination peg~$d$ such that the number of moves is bounded above as the theorem states. For simplicity, we start by presenting an algorithm for the case where $s=1, d=h$. This will be done in Section~\ref{sec51}. Then we present an algorithm for the general case (Section~\ref{sec52}). We note that, in fact, the first algorithm does employ the second. An important point in both cases is a partitioning of the set of disks to blocks, which will be discussed in Section~\ref{sec53}. Time analysis of the two algorithms will be provided in Section~\ref{sec54}.


\subsection{Moving disks between the farthermost pegs}~\label{sec51}
Here we present \texttt{FarthestMove} (Algorithm~\ref{alg4333}), designed to move a block $B$ of $n$ disks between the two farthest pegs in \texttt{Path$_h$}, where $h \ge 3$.

We partition $B$ in some way to blocks $B_1(h,B), B_2(h,B), \ldots, B_{h-1}(h,B)$ of disks. Whenever $h$ and $B$ are implied by the context, we write $B_i$ instead of $B_i(h,B)$. The block $B_1$ consists of the
smallest $\widetilde{n}_1$ disks $1, 2, \ldots, \widetilde{n}_1$, the block $B_2$ --- of the $\widetilde{n}_2$ next smallest disks $\widetilde{n}_1 + 1, \widetilde{n}_1 + 2, \ldots, \widetilde{n}_1 + \widetilde{n}_2$, and so forth. Similarly to the shorthand used when denoting blocks, we may write $\widetilde{n}_i$ (with a possible superscript) instead of $\widetilde{n}_i(h,n)$. For any $i \in [h-1]$, let $B(i) = \bigcup_{j=i}^{h-1}
B_j$ and $n(i) = |B(i)| = \Sigma_{j=i}^{h-1}
\widetilde{n}_j$, where $\widetilde{n}_j = |B_j|$. (Note that $B(1) = B$ and $n(1) = n$.)

The determination of the sizes $\widetilde{n}_i$ is crucial for the number of moves the algorithm makes, and will be explained later. However, for the algorithm to work correctly, it is only required for $B_{h-1}$ to consist of the single disk $B_{\max}$ --- the largest.
The algorithm consists of three phases (see Figure \ref{fig:far} for an illustration):
\begin{itemize}
\item {\bf Spread:} Move the $h-2$ $(=d-s-1)$ first
blocks $B_1,\ldots,B_{h-2}$ from the source peg~$s$ to pegs
$d, \ldots, d-h+3$, respectively. It consists of $h-2$
iterations. At the $j$-th iteration, $j \in [h-2]$, block $B_j$
is (recursively) moved from $s$ to $d-j+1$,
using the set $[1,d-j+1]$ of available pegs. (Note that the $1$-disk block $B_{h-1}$ has not been moved from $s$ to $s+1$. It is more convenient for us to view this move as the first move of the next stage.)

\item {\bf Reverse:} The role of this phase is to reverse the positions of
the $h-1$ blocks on the $h$ pegs, i.e., a block residing, at the beginning of this phase, on peg
$s+i-1$ reaches, at the end of the phase, its reflected position --- peg $d-i+1$.
The phase starts by moving the last block $B_{h-1}$ from $s$ to $s+1$. Then, $h-2$ rounds are carried out, each of which brings the next larger block to its reflected position.
The following highlights the way each round $j$ achieves its goal:
\begin{itemize}
\item Before this round, blocks $B_1, \ldots, B_{j-1}$ are on pegs $s, \ldots, s+j-2$, respectively; peg $s+j-1$ is vacant;
blocks $B_j, \ldots, B_{h-1}$ are on pegs $d, \ldots, s+j$, respectively.
\item Block $B_j$ is moved from $d$ to $s+j-1$.
\item Blocks $B_{j+1}, \ldots, B_{h-1}$ are each shifted one peg to the right.
\item At the end of the round, blocks $B_1, \ldots, B_{j}$ are on pegs $s, \ldots, s+j-1$, respectively; peg $s+j$ is vacant; blocks $B_{j+1}, \ldots, B_{h-1}$ are on pegs $d, \ldots, s+j+1$, respectively.
\end{itemize}
Thus, as a result of this phase, block $B_{h-1}$
is moved from $s$ to $d$ and, for each $j \in [h-2]$,
block $B_j$ is moved from peg $d-j+1$ to the reflected position,
namely, peg $s+j-1$.

\item {\bf Accumulate:} The role of this phase is symmetrical to that of {\bf
Spread}, i.e., to move the $h-2$ first blocks $B_{h-2},\ldots,B_{1}$
from pegs $d-2,\ldots,s$, respectively, to $d$. Similarly, it consists of $h-2$ iterations, where at the $j$-th iteration block $B_{h-1-j}$ is
moved from $s+h-2-j$ to $d$ using the set
$[s+h-2-j,d]$ of available pegs.
\end{itemize}
It is easy to verify that, as the execution of the algorithm
terminates, all the blocks are legally gathered on $d$, as required.
\begin{figure*}[htp]
\begin{center}
\begin{minipage}{\textwidth} 
\begin{center}
\setlength{\epsfxsize}{3.7in} \epsfbox{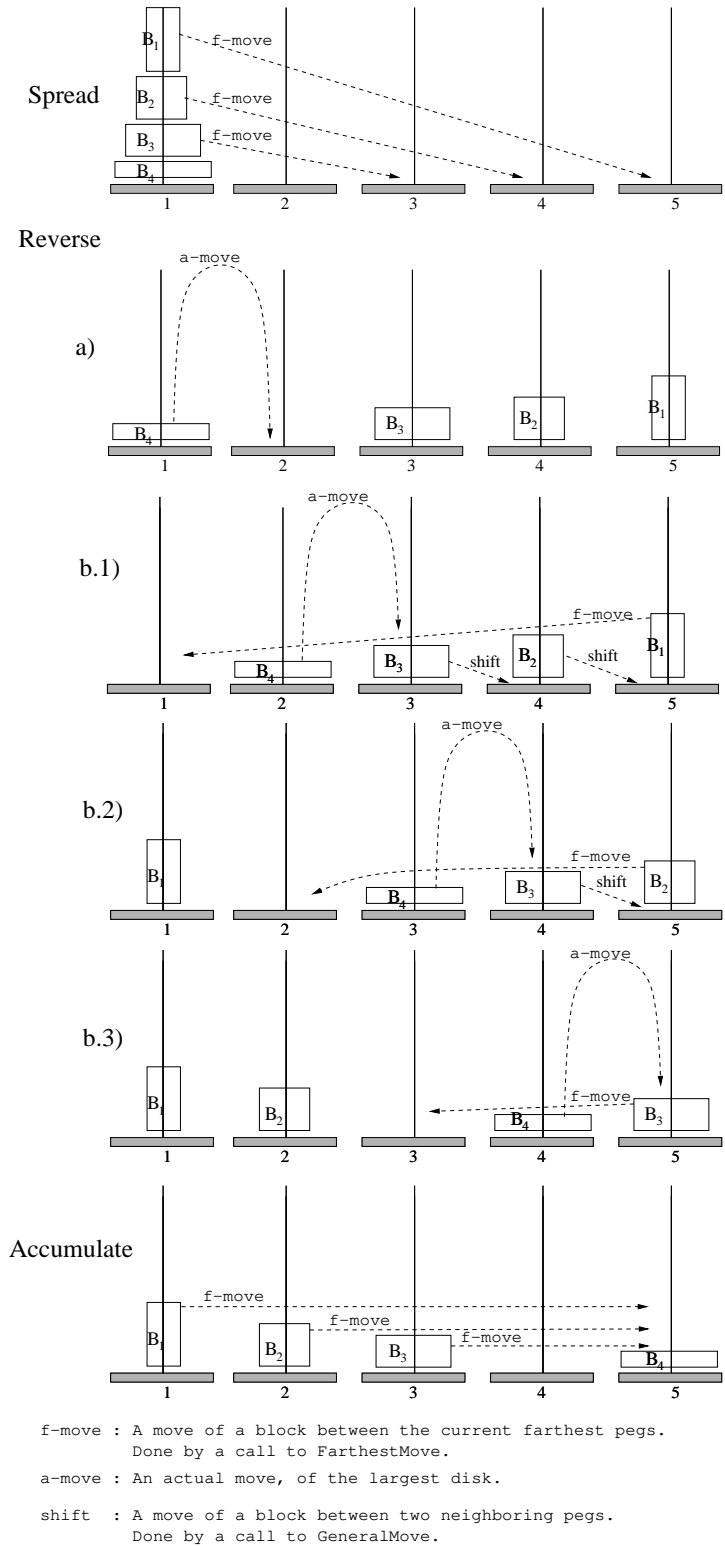}
\end{center}
\end{minipage}
\caption[]{ \label{fig:far} \sf Main steps of \texttt{FarthestMove} for \texttt{Path$_5$}, with $B =
\bigcup_{i=1}^{4} B_i$, $s = 1$ and $d = 5$.}
\end{center}
\end{figure*}
The formal description of the algorithm \texttt{FarthestMove} is given in
Algorithm \ref{alg4333}.

\begin{algorithm}
\caption{\texttt{FarthestMove}($B,s,d)$}~\label{alg4333}
\begin{algorithmic}
\STATE{\hspace{-12pt}/* The procedure moves a block $B$ from the leftmost peg $s$ to the rightmost peg $d$. Prior $\quad\,\,\,$ */}
\STATE{\hspace{-12pt}/* to its main stages, the block is partitioned into $h-1$ blocks which are treated as `atomic $\,\,\,\!$ */}
\STATE{\hspace{-12pt}/* units'. At the first of the main stages, these blocks are spread along the pegs; at the $\qquad\,\,$ */}
\STATE{\hspace{-12pt}/* second -- their order is reversed; at the third stage they are accumulated on the destination $\!$*/}
\STATE{\hspace{-12pt}/* peg. The procedure requires that $s < d$. If $d-s=1$, then $|B| \le 1$. The `*' denotes \qquad\,\,\,\,\,\, */}
\STATE{\hspace{-12pt}/* concatenation of move-sequences. $\qquad\qquad\qquad\qquad\qquad\qquad\qquad\qquad\qquad \qquad\qquad\qquad \,\,\,\,\,\,\, $*/}
\STATE {$T \leftarrow [\,]$ \qquad /* initializing the result sequence */}
\IF{$B \ne \emptyset$}
\STATE{$h \leftarrow d-s +1$}
\STATE{$(B_1,\ldots,B_{h-1}) = \texttt{Partition}(h,B)$} /* Algorithm~\ref{alg42} below */
\STATE{}
\STATE{\bf /* Spread: */}
\FOR{$j \leftarrow 1$ to $h-2$}
\STATE{/* At each step, the next block moves to the farthest available peg. */}
\STATE{$T \leftarrow T * \texttt{FarthestMove}(B_j,s,d-j+1)$}
\ENDFOR
\STATE{}
\STATE{\bf /* Reverse: */}
\STATE{$T \leftarrow T * t_{\mathsmaller{B}_{\max},s,s+1}$} \hskip12pt /* Moving the largest disk once a peg to the right. */
\STATE {$M \leftarrow [\,]$ \qquad\qquad\qquad /* Initializing the temporary move-sequence. */}
\FOR{$j \leftarrow 1$ to $h-2$}
\STATE{/* Block $B_j$ moves to the peg on which it will stay for the rest of this phase. */}
\STATE{$M_f \leftarrow \texttt{FarthestMove}(B_j,s+j-1,d)$}
\STATE $M \leftarrow M * M_f^{-1}$
\FOR{$i \leftarrow j+1$ to $h-2$}
\STATE{/* Each block whose index is higher than $j$ is shifted one peg to the right. */}
\STATE{$M \leftarrow M * \texttt{GeneralMove}(B_i,d+j-i,d+j+1-i,[s+j,d+j+1-i])$}
\ENDFOR
\STATE{$T \leftarrow T * M * t_{\mathsmaller{B}_{\max},j+1,j+2}$} \hskip6pt/* Moving the largest disk a peg to the right. */
\ENDFOR
\STATE{}
\STATE{\bf /* Accumulate: */}
\FOR{$j \leftarrow 1$ to $h-2$}
\STATE{/* At each step, the next block is gathered on the destination peg. */}
\STATE{$ T \leftarrow T * \texttt{FarthestMove}(B_{h-1-j},s+h-2-j,d)$}
\ENDFOR
\ENDIF
\STATE {return $T$}
\end{algorithmic}
\end{algorithm}


\subsection{Moving disks between any pegs}~\label{sec52}
The general algorithm for moving a block of disks, between
any two pegs $s$ and $d$, in \texttt{Path$_h$}, is presented here.
For convenience we assume that $s < d$. This does not effect the generality of the algorithm since, as was mentioned in Section~\ref{sec2}, if $M$ is a solution of $R_{h,s,n} \rightarrow R_{h,d,n}$, then $M^{-1}$ is a solution of $R_{h,d,n} \rightarrow R_{h,s,n}$.

The issue of partitioning the disk set is handled exactly as it was done in
\texttt{FarthestMove}. Algorithm \texttt{GeneralMove} consists of five phases: two spread phases, a phase in which the remainder disks are moved, and two accumulate phases. The set of available pegs is denoted by $A$, and its smallest and largest pegs by $A_{\min}$ and $A_{\max}$, respectively.
\begin{itemize}
\item {\bf LeftSpread:} In this phase the $s-A_{\min}$ first blocks
$B_1,\ldots,B_{s-\mathsmaller{A}_{\min}}$ are taken from peg $s$ to pegs
$A_{\min},\ldots,s-1$, respectively. It consists of $s-A_{\min}$ iterations. At the $j$-th iteration, $1 \le j \le s-A_{\min}$, block $B_j$ is (recursively) moved from $s$ to $A_{\min}+j-1$ using the set
$[A_{\min}+j-1,A_{\max}]$ of available pegs.

\item {\bf RightSpread:} Here, the $A_{\max}-d$ next blocks
are taken, from peg $s$ to pegs $A_{\max},\ldots,d+1$,
respectively. At each iteration $j$, where $1 \le j \le A_{\max}-d$, block $B_{s-\mathsmaller{A}_{\min}+j}$ is moved from $s$ to $A_{\max}-j+1$, using $[s,A_{\max}-j+1]$. Since at each iteration the source and destination are at the opposite ends of the currently available set of free pegs, the move is done using algorithm \texttt{FarthestMove}.

\item {\bf MoveRemainder:} In this phase, the remaining $B(d-s)$ blocks
are moved from $s$ to $d$. Since, as before, the source and destination are at the opposite sides of the set $[s,d]$ of available pegs, this is done by algorithm \texttt{FarthestMove}.

\item {\bf LeftAccumulate:} The role of this phase is symmetrical to that of
{\bf RightSpread}, that is, move $B_{s-A_{\min}+1},\ldots,B_{s+|A|-1-d}$ from  $A_{\max},\ldots,d+1$ to $d$,
respectively. It consists of $A_{\max}-d$ iterations, where
at iteration $j$, block $B_{s +|A| - d - j}$ is moved from $d+j$ to $d$ using $[s,d+j]$. Unlike {\bf RightSpread}, the moves made in this phase are not between the two farthest available pegs.

\item {\bf RightAccumulate:} This phase is symmetrical to
{\bf LeftSpread}, consisting of $s-A_{\min}$ iterations where, at the $j$-th iteration,
$B_{s-j}$ is moved from peg $s - j$ to peg $d$, using $[s - j,A_{\max}]$.
\end {itemize}
It is easy to verify that, as the algorithm
terminates, all the blocks are legally gathered on the destination
peg $d$, as required (see
Figure \ref{fig:gen} for an illustration). The correctness proof is omitted.
The formal description of \texttt{GeneralMove} is given in
Algorithm \ref{alg4222}.
Note that, if the source and destination pegs are at the opposite sides of $A$, then \texttt{GeneralMove} does the same as \texttt{FarthestMove}.
\begin{figure*}[htp]
\begin{center}
\begin{minipage}{\textwidth} 
\begin{center}
\setlength{\epsfxsize}{4.2in} \epsfbox{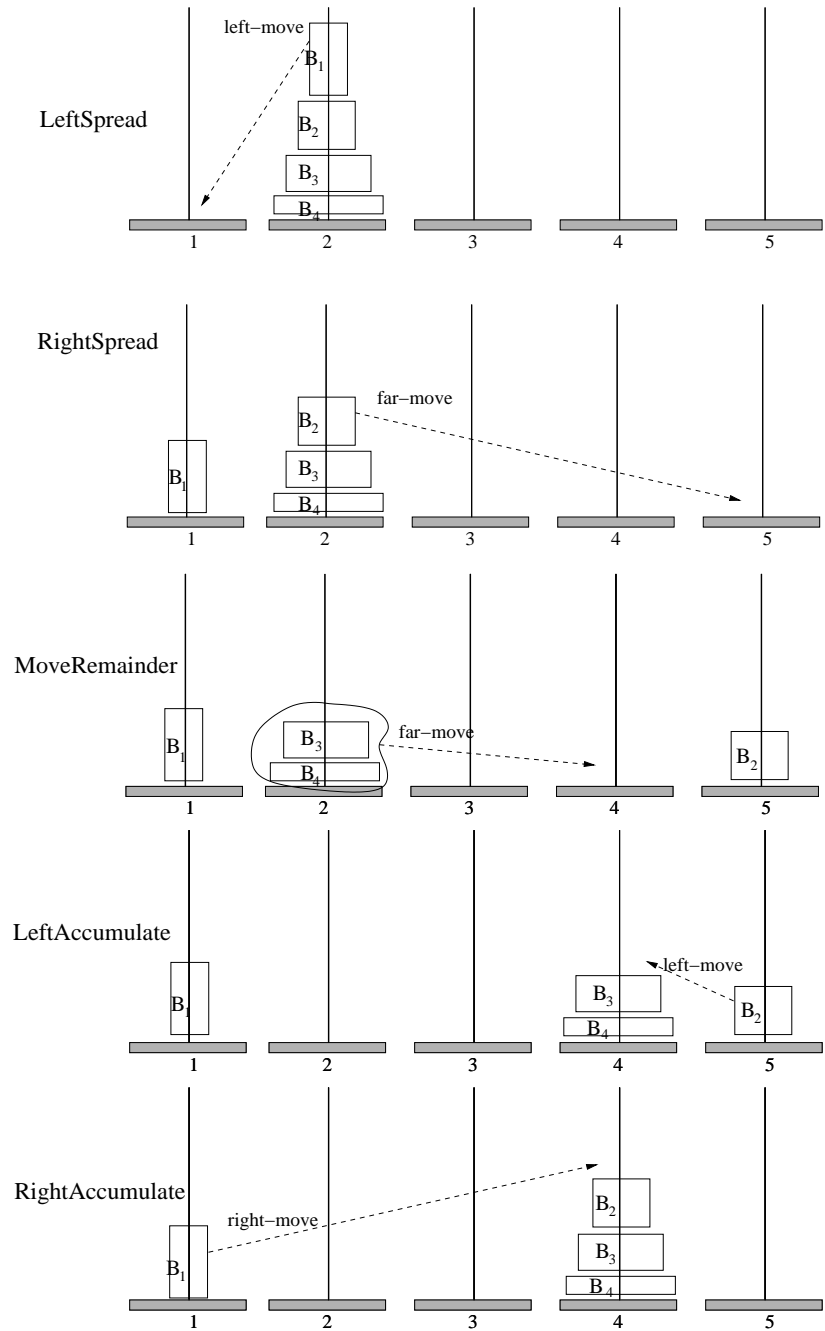}
\end{center}
\end{minipage}
\caption[]{ \label{fig:gen} \sf An illustration of the execution of
the algorithm \texttt{GeneralMove} for \texttt{Path$_5$}, with $B =
\bigcup_{i=1}^{4} B_i$, $s = 2$, $d = 4$ and $A = [5]$.}
\end{center}
\end{figure*}

\begin{algorithm}
\caption{$\rm\texttt{GeneralMove}(B,s,d,A)$}~\label{alg4222}
\begin{algorithmic}
\STATE{\hspace{-12pt}/* The procedure moves a block $B$ from any peg $s$ to any other peg $d$. The partitioning is as in $\,\,\,$ */}
\STATE{\hspace{-12pt}/* \texttt{FarthestMove}. Then some blocks are spread to the left of $s$ and to the right of $d$; next -- the $\,\,$ */}
\STATE{\hspace{-12pt}/* remaining blocks are moved; eventually, the smallest disks are accumulated on the destination $\!$*/}
\STATE{\hspace{-12pt}/* peg. The procedure requires that $s,d \in A$, $s < d$. If $d-s=1$, then $|B| \le 1$. The `*' denotes \! */}
\STATE{\hspace{-12pt}/* concatenation of move-sequences. $\qquad\qquad\qquad\qquad\qquad\qquad\qquad\qquad \qquad\qquad\qquad\qquad\qquad \, $*/}
\STATE {$T \leftarrow [\,]$ \qquad /* an empty sequence */}
\IF{$B \ne \emptyset$}
 \STATE{$(B_1,\ldots,B_{|A|-1}) = {\rm\texttt{Partition}}(|A|,B)$} /* Algorithm~\ref{alg42} below */
 \STATE{}
 \STATE{\hspace{-12pt}\bf /* LeftSpread: */}
 \FOR{$j \leftarrow 1$ to $s-A_{\min}$}
  \STATE{/* At each step, the next block moves to the farthest available peg on the left side. Since */}
  \STATE{/* the source is not necessarily at the rightmost available peg, this is done by \texttt{GeneralMove}.*/}
  \STATE $f_n \leftarrow A_{\min}+j-1$
  \STATE{$M \leftarrow \texttt{GeneralMove}(B_j,f_n,s,[f_n,A_{\max}])$}
  \STATE $T \leftarrow T * M^{-1}$
 \ENDFOR
 \STATE{}
 \STATE{\hspace{-12pt}\bf /* RightSpread: */}
 \FOR{$j \leftarrow 1$ to $A_{\max}-d$}
  \STATE{/* At each step, the next block moves to the farthest available peg on the right side. */}
  \STATE{/* Since the source is at the leftmost available peg, this is done by \texttt{FarthestMove}.*/}
  \STATE $f_x \leftarrow A_{\max}-j+1$
  \STATE{$T \leftarrow T * \texttt{FarthestMove}(B_{s-\mathsmaller{A}_{\min}+j},s,f_x$)}
 \ENDFOR
 \STATE{}
 \STATE{\hspace{-12pt}\bf /* MoveRemainder: */}
 \STATE{$T \leftarrow T * \texttt{FarthestMove}(B(|A|-d+s),s,d)$}
 \STATE{}
 \STATE{\hspace{-12pt}\bf /* LeftAccumulate: */}
 \FOR{$j \leftarrow 1$ to $A_{\max}-d$}
  \STATE $f_x \leftarrow d+j$
  \STATE $M \leftarrow \texttt{GeneralMove}(B_{|A|+s-d-j},d,f_x,[s,f_x])$
  \STATE $T \leftarrow T * M^{-1}$
 \ENDFOR
 \STATE{}
 \STATE{\hspace{-12pt}\bf /* RightAccumulate: */}
 \FOR{$j \leftarrow 1$ to $s-A_{\min}$}
  \STATE $f_n \leftarrow s-j$
  \STATE{$T \leftarrow T * \texttt{GeneralMove}(B_{s-j},f_n,d,[f_n,A_{\max}])$}
 \ENDFOR
\ENDIF
\STATE {return $T$}
\end{algorithmic}
\end{algorithm}


\subsection{Partitioning the disks into blocks}~\label{sec53}
In this section we discuss how to set the sizes of the blocks such that the number of moves will be relatively low.
The general idea is to view the $h-1$ blocks as `atomic' units, each
occupying a single peg (except for when it is moved). During
the process of moving a block $B_i$ from one peg $s$ to another
peg $d$, the other blocks stay intact. Furthermore, the pegs
used by disks from $B_i$ during this process form an interval of
contiguous integers, contained in the set of pegs
available to this end, namely, the inclusion-wise maximal
interval of pegs not occupied by any of the blocks
$B_{1},\ldots,B_{i-1}$.

To move a block between pegs efficiently,  all available
pegs should usually be in use. More specifically, during the process of
moving a sufficiently large block $B_i$, all of the
available pegs are used. Furthermore, the algorithm allocates
precisely $h-i+1$ pegs to this end. This suggests that, in order to
perform efficiently, the sizes of the $h-1$ blocks should satisfy
$\widetilde{n}_1 \ge
 \ldots \ge \widetilde{n}_{h-1}=1$ (assuming $n$ is sufficiently large).

\subsubsection{\large The Partition procedure}
In this section we present \texttt{Partition} --- the procedure for
partitioning a block $B$ into the $h-1$ blocks $(B_{1}, B_{2}, \ldots, B_{h-1})$.
We start by presenting an auxiliary function \texttt{Remainder},
which, for each stage~$j$, provides the total number of disks to be assigned to the latter blocks --
$(B_{j+1}, B_{j+2}, \ldots, B_{h-1})$. The definition of this function is given in Algorithm~\ref{alg413}.

\begin{algorithm}
\caption{\texttt{Remainder($h,n$)}}~\label{alg413}
\begin{algorithmic}
\STATE{/* Determines the size of the next set, by calculating the number of `larger' disks. */}
\STATE{/* It is assumed that $h \ge 3$ and $n \ge 1$. */}
\IF{$n < h$}
\STATE{return $\max\{n-1,1\}$}
\ELSE
\IF{$h=4$}
\STATE{return $\min\{n,{\rm round}(\sqrt{2n})\}$}
\ELSE
\STATE{$\alpha \leftarrow \frac{h-3}{h-2}$}
\STATE{return $\min\left\{n,\left\lceil\frac{((h-2)!)^{\alpha}}{(h-3)!}n^{\alpha}\right\rceil\right\}$}
\ENDIF
\ENDIF
\end{algorithmic}
\end{algorithm}

\begin{lemma}\label{remainder02}
For any integers $h \ge 3$ and $n \ge 1$, we have $1 \le$
{\rm\texttt{Remainder}($h,n$)} $\le n$. Furthermore,
\begin {itemize}
\item If $h=3$, then {\rm\texttt{Remainder}($h,n$)} = $1$.
\item If $h \ge 4$ and $1 < n < h$, then {\rm\texttt{Remainder}($h,n$)} = $n-1$.
\item If $h \ge 4$ and $n \ge h$, then {\rm\texttt{Remainder}($h,n$)} $\ge 2$.
\end {itemize}
\end{lemma}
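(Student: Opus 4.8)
The plan is to verify the claimed bounds by working through the cases in the definition of \texttt{Remainder} given in Algorithm~\ref{alg413}, which is a simple case analysis; the only genuinely quantitative part is bounding the ceiling expression $\left\lceil\frac{((h-2)!)^{\alpha}}{(h-3)!}n^{\alpha}\right\rceil$ in the main branch ($h\ge 5$, $n\ge h$), and I expect that to be the main obstacle.

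First I would dispense with the easy cases. If $n<h$, then \texttt{Remainder}($h,n$) $=\max\{n-1,1\}$, which lies in $[1,n]$ since $n\ge 1$; moreover for $h\ge 4$ and $1<n<h$ this equals $n-1$, giving the second bullet. If $h=3$, then $n\ge h=3>n$ is impossible only when $n\ge 3$; for $n\ge 3$ the first branch is not taken, and since $h=3$ is neither $4$ nor $\ge 5$, the code returns $\min\{n,\lceil\frac{((h-2)!)^{\alpha}}{(h-3)!}n^{\alpha}\rceil\}$ with $h=3$, $\alpha=0$, so $\frac{(1!)^0}{0!}n^0=1$ and the value is $\min\{n,1\}=1$; and for $1\le n<3=h$ the first branch gives $\max\{n-1,1\}=1$ as well. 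Hence \texttt{Remainder}($3,n$) $=1$ for all $n\ge 1$, proving the first bullet (and incidentally the bounds $1\le\cdot\le n$ in this case). For $h=4$ and $n\ge h=4$, the value is $\min\{n,\operatorname{round}(\sqrt{2n})\}$; since $n\ge 1$ forces $\operatorname{round}(\sqrt{2n})\ge 1$, and $n\ge 4$ gives $\sqrt{2n}\ge\sqrt 8>2$ so $\operatorname{round}(\sqrt{2n})\ge 2$, we get a value in $[2,n]$, handling the third bullet for $h=4$ and the general bounds for $h=4$.

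It remains to treat $h\ge 5$, $n\ge h$. Here \texttt{Remainder}($h,n$) $=\min\{n,\lceil c_h n^{\alpha}\rceil\}$ with $\alpha=\frac{h-3}{h-2}\in(0,1)$ and $c_h=\frac{((h-2)!)^{\alpha}}{(h-3)!}$. The upper bound $\le n$ is immediate from the $\min$. For the lower bound $\ge 2$, since $\lceil c_h n^\alpha\rceil\ge 2$ iff $c_h n^\alpha>1$, it suffices to show $c_h n^\alpha>1$ for all $n\ge h\ge 5$; because $\alpha>0$ and $n\ge h$, it is enough to check $c_h h^\alpha>1$, i.e. $((h-2)!)^{\alpha}h^{\alpha}>(h-3)!$, i.e. $(h(h-2)!)^{\alpha}>(h-3)!$. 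I would verify this: raising both sides to the power $1/\alpha=\frac{h-2}{h-3}$, the claim becomes $h\,(h-2)!>\big((h-3)!\big)^{(h-2)/(h-3)}=(h-3)!\cdot\big((h-3)!\big)^{1/(h-3)}$, i.e. $h(h-2)>\big((h-3)!\big)^{1/(h-3)}$. Since $(h-3)!\le(h-3)^{h-3}$, the right side is at most $h-3<h(h-2)$, which closes the estimate. (One should also note $\lceil c_h n^\alpha\rceil$ and $n$ are both $\ge 2$ here, so the $\min$ is $\ge 2$, and the overall bound $1\le\texttt{Remainder}(h,n)\le n$ holds in all remaining cases as well, completing the proof.) The only subtlety is making sure the inequality $h(h-2)>((h-3)!)^{1/(h-3)}$ is checked at the smallest relevant value $h=5$ (where it reads $15>2^{1/2}$, clearly true) and monotonically thereafter, which is why I flagged this quantitative step as the main point to get right.
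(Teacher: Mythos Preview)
Your case analysis is correct and is exactly the kind of verification the paper has in mind: the paper itself simply states ``The proof is straightforward'' and omits all details. Your treatment of the main quantitative step (showing $c_h n^\alpha>1$ for $h\ge 5$, $n\ge h$ via $h(h-2)>((h-3)!)^{1/(h-3)}$) is clean and sufficient; the only cosmetic issue is the garbled phrase ``$n\ge h=3>n$ is impossible only when $n\ge 3$'' in the $h=3$ paragraph, which you should rewrite for clarity.
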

The proof is straightforward.

The formal description of the procedure \texttt{Partition} is given in
Algorithm \ref{alg42}.
\begin{algorithm}
\caption{\texttt{Partition}($h,B$)}~\label{alg42}
\begin{algorithmic}
\STATE{/* Returns a partition of the $n$ disks into $h-1$ blocks of consecutive disks. */}
\STATE{/* It is assumed that $h \ge 2$ and $B$ is a non-empty block of disks. */}
\FOR{$j \leftarrow 1$ to $h-2$}
\STATE{$n_{j} \leftarrow B_{\max}+1-B_{\min}$}
\STATE{${m}_j \leftarrow $ \rm\texttt{Remainder}($h-j+1,n_j$)}
\STATE{$\widetilde {n}_j \leftarrow n_j - m_j$}
\STATE{$B_j \leftarrow [B_{\min},B_{\min}+\widetilde{n}_j -1]$ \qquad /* if $\widetilde{n}_j = 0$, then $B_j=\emptyset$ */}
\STATE{$B \leftarrow B - B_j$}
\ENDFOR
\STATE{/* B is now a singleton, so that $B_{h-1} = \{B_{\max}\}$ */}
\STATE{$B_{h-1} \leftarrow B$}
\STATE{return $(B_1,\ldots,B_{h-1})$}
\end{algorithmic}
\end{algorithm}

We argue that \texttt{Partition} is well-defined. To prove
this, it suffices to show that at each of the $h-2$ invocations of
\texttt{Remainder}($h-j+1$,$n_j$), $1
\le j \le h-2$, we have $h-j+1 \ge 3$ and $n_j \ge 1$.
The first of these inequalities follows from the fact that $j \le h-2$. Now observe that $n_1
= n \ge 1$, and $n_{j+1} = m_j = $ \texttt{Remainder}($h-j+1,n_j$). Hence, by
Lemma~\ref{remainder02}, a simple inductive argument yields
\begin{equation} \label{eq:s} n_j \ge 1, \qquad 1 \le j \le h-2,
\end{equation}
and we are done.

In the following lemma, whose proof is straightforward, we collect for later reference a few properties of the partition $(B_1,\ldots,B_{h-1})$.
\begin{lemma}\label{partition03}
The tuple $(B_1,\ldots,B_{h-1})$ is a partition of $B$ into
blocks, satisfying:
\begin {itemize}
\item $B_{h-1} = \{B_{\max}\}$.
\item $|B_j| = \widetilde{n}_j \le n-1$ for each $j \in [h-2]$.
\item Each non-empty block is lighter than all subsequent non-empty blocks in the partition.
\end{itemize}
\end {lemma}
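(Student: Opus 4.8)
The plan is to prove all three items at once by tracking a single quantity through the loop of \texttt{Partition}: the number $n_j = B_{\max}+1-B_{\min}$ of disks still present in $B$ at the start of iteration $j$. I treat $h\ge 3$; the case $h=2$ is immediate, since then the loop is empty and $|B|\le 1$. Recall that (\ref{eq:s}) already guarantees $n_j\ge 1$ for every $1\le j\le h-2$, so each call \texttt{Remainder}$(h-j+1,n_j)$ is legal, and Lemma~\ref{remainder02} gives $1\le m_j\le n_j$, whence $0\le \widetilde{n}_j = n_j-m_j\le n_j-1$. In particular each $B_j=[B_{\min},B_{\min}+\widetilde{n}_j-1]$ is a well-defined (possibly empty) block of consecutive sizes.

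First I would settle the partition and ``lighter'' properties structurally. In iteration $j$ the block $B_j$ is carved off as the $\widetilde{n}_j$ smallest disks currently in $B$, and the update $B\leftarrow B-B_j$ leaves the interval $[B_{\min}+\widetilde{n}_j,\,B_{\max}]$, which is again a block whose minimum has (weakly) increased and whose maximum is unchanged. Hence $B_1,\ldots,B_{h-2}$ are pairwise disjoint intervals peeled off from the left; their union with the final leftover block $B_{h-1}$ is exactly the original $B$; and whenever $i<j$, every disk of $B_i$ is strictly smaller than the current minimum from which $B_j$ is later carved, so $B_i$ is lighter than $B_j$. Restricting to non-empty blocks gives the third item.

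Next, for the size bound, observe that the update yields $n_{j+1} = B_{\max}+1-(B_{\min}+\widetilde{n}_j) = n_j-\widetilde{n}_j = m_j$, so $n_{j+1}\le n_j$, and induction from $n_1=n$ gives $n_j\le n$ for all $j$. Combining with $\widetilde{n}_j\le n_j-1$ from the first paragraph yields $|B_j| = \widetilde{n}_j\le n-1$ for every $j\in[h-2]$, which is the second item.

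Finally, for $B_{h-1}=\{B_{\max}\}$: after the loop the leftover block contains $n_{h-1} = m_{h-2} = \texttt{Remainder}(3,n_{h-2})$ disks, using $h-j+1 = 3$ when $j=h-2$, and by Lemma~\ref{remainder02} we have \texttt{Remainder}$(3,\cdot)\equiv 1$; since the maximum of the leftover block has stayed equal to $B_{\max}$ throughout, a one-disk leftover block can only be $\{B_{\max}\}$, giving the first item. There is no genuine obstacle in this proof --- it is pure bookkeeping --- so the only point demanding attention is the index accounting, chiefly that the last invocation of \texttt{Remainder} has first argument exactly $3$ (so that the $h=3$ branch of Lemma~\ref{remainder02} applies) and that peeling blocks off the left never disturbs the top disk $B_{\max}$.
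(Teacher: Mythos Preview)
Your proof is correct. The paper itself omits the argument entirely, stating only that the proof is straightforward, so there is no alternative approach to compare against; you have simply supplied the bookkeeping the authors left to the reader, and your tracking of $n_j$, together with the observation that the final call is \texttt{Remainder}$(3,\cdot)=1$, is exactly what is needed.

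One minor quibble: when you dismiss $h=2$ by asserting $|B|\le 1$, note that this is not a precondition of \texttt{Partition} itself but rather of its callers (\texttt{FarthestMove} and \texttt{GeneralMove} both stipulate that if $d-s=1$ then $|B|\le 1$). The lemma as stated is really about $h\ge 3$, which is the regime in which the paper established well-definedness via (\ref{eq:s}); your proof for that range is complete.
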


It is easy to verify that, for a pair of indices $i \in [h-1]$ and $j \in
[h-i]$,
$$
 B_{j+i-1}(h,B) = B_{j}(h-i+1,B(i)),
$$
or, equivalently:
 \begin{lemma}\label{blocksize04}
 For any integers $1 \le i \le h-1$ and $1 \le j \le h-i$:
 $$\widetilde{n}_{j+i-1}(h,n) = \widetilde{n}_{j}(h-i+1,n(i)).$$
 \end{lemma}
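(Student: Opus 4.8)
The plan is to reduce the statement to a one-line induction on the scalar recursion that drives \texttt{Partition}. Observe first that a run of \texttt{Partition}$(h,B)$ with $|B|=n$ is completely governed by the sequence $\nu_1,\nu_2,\ldots,\nu_{h-1}$ defined by $\nu_1=n$ and $\nu_{k+1}=\texttt{Remainder}(h-k+1,\nu_k)$ for $1\le k\le h-2$: indeed the variable $n_j$ at the start of the $j$-th iteration equals $\nu_j$ (the remaining disk set is always an interval, and after iteration $j$ its size is $m_j=\texttt{Remainder}(h-j+1,n_j)=\nu_{j+1}$), so that $\widetilde n_k(h,n)=\nu_k-\nu_{k+1}$ for $1\le k\le h-2$ while $\widetilde n_{h-1}(h,n)=\nu_{h-1}$. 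Summing the telescoping differences also gives $n(i)=\sum_{k=i}^{h-1}\widetilde n_k=\nu_i$ for every $i\in[h-1]$; this is the link we will exploit.

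Next I would run \texttt{Partition} on the suffix, i.e.\ \texttt{Partition}$(h-i+1,B(i))$ with $|B(i)|=n(i)=\nu_i$; this call is legitimate because $h-i+1\ge 2$, and its well-definedness is exactly the argument used above (the estimate $n_j\ge 1$ throughout following from Lemma~\ref{remainder02}). Denote its governing sequence by $\mu_1,\ldots,\mu_{h-i}$, so $\mu_1=\nu_i$ and $\mu_{k+1}=\texttt{Remainder}\bigl((h-i+1)-k+1,\mu_k\bigr)$ for $1\le k\le h-i-1$, with $\widetilde n_k(h-i+1,n(i))=\mu_k-\mu_{k+1}$ for $1\le k\le h-i-1$ and $\widetilde n_{h-i}(h-i+1,n(i))=\mu_{h-i}$. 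The heart of the proof is the claim that $\mu_k=\nu_{k+i-1}$ for all $1\le k\le h-i$, proved by induction on $k$: the base case is $\mu_1=\nu_i$, and for the step one only has to notice that the first argument of \texttt{Remainder} matches, $(h-i+1)-k+1=h-(k+i-1)+1$, so that $\mu_{k+1}=\texttt{Remainder}\bigl(h-(k+i-1)+1,\nu_{k+i-1}\bigr)=\nu_{k+i}$ by the defining relation for $\nu$ (valid since $k+i-1\le h-2$).

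Finally I would read off the lemma from $\mu_k=\nu_{k+i-1}$. For $1\le j\le h-i-1$ we get $\widetilde n_j(h-i+1,n(i))=\mu_j-\mu_{j+1}=\nu_{j+i-1}-\nu_{j+i}=\widetilde n_{j+i-1}(h,n)$, the last equality being the difference formula applied at index $j+i-1\le h-2$. For the remaining value $j=h-i$ both sides reduce to $\nu_{h-1}$: the left side is $\widetilde n_{h-i}(h-i+1,n(i))=\mu_{h-i}=\nu_{h-1}$ and the right side is $\widetilde n_{h-1}(h,n)=\nu_{h-1}$. Since $1\le i\le h-1$ and $1\le j\le h-i$ were arbitrary, this proves the identity, and the equivalent block statement $B_{j+i-1}(h,B)=B_j(h-i+1,B(i))$ follows because both partitions split the interval $B(i)$ into consecutive blocks of the same sizes in the same order. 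The only point requiring a little care is precisely this boundary index $j=h-i$, since $\widetilde n_{h-1}$ is defined as the leftover singleton rather than as a difference of consecutive $\nu$'s; beyond that the proof is pure index bookkeeping, with no real obstacle.
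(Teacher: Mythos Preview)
Your proof is correct; the paper itself omits the argument entirely, asserting just before the lemma that the block identity $B_{j+i-1}(h,B)=B_j(h-i+1,B(i))$ ``is easy to verify'' and stating the lemma as its equivalent reformulation. Your explicit induction on the governing sequence $\nu_k$ (and the matching $\mu_k$ for the suffix call) is precisely the routine verification the authors had in mind, including the boundary case $j=h-i$ that you handle carefully.
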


\subsection{\texttt{FarthestMove} \texttt{versus} \texttt{GeneralMove}} \label{sec54}
We assume without loss of generality that $A=[1,h]$. For any integers $h \ge 3,n \ge 0,s$
and $d$, such that $1 \le s < d \le h$, we denote by $G_{s
\rightarrow d}(h,n)$ the number of moves
required by \texttt{GeneralMove} to
move a block of size $n$ from peg $s$ to peg $d$
using $A$. Similarly, we denote by $F(h,n)$ the number of moves required by \texttt{FarthestMove} to move
such a block from peg~1 to peg~$h$. (Note that $F = G_{1 \rightarrow h}$.)

It is easy to verify that, for $h=3$, the algorithm \texttt{GeneralMove}
works exactly as does the classical algorithm of \cite{ScGrSm44}. In
particular, it requires $3^n-1$ moves to transfer $n$ disks between
the two farthest pegs in \texttt{Path$_3$}, and $\frac{3^n-1}{2}$ moves to transfer them between neighboring pegs, yielding:
\begin {lemma}\label{classical05}
For any non-negative integer $n$: $$G_{1 \rightarrow 2}(3,n) =
\frac{1}{2}F(3,n).$$
\end {lemma}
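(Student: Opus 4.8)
The plan is to unwind the definitions of \texttt{GeneralMove} and \texttt{FarthestMove} in the special case $h=3$ and reduce everything to the known behavior of the classical \texttt{Path$_3$} algorithm of \cite{ScGrSm44}. First I would observe that for $h=3$ the \texttt{Partition} procedure, driven by \texttt{Remainder}, splits a block $B$ of $n\ge 1$ disks into exactly two blocks $B_1,B_2$ with $B_2=\{B_{\max}\}$ a singleton (this is Lemma~\ref{partition03}, and for $h=3$ Lemma~\ref{remainder02} pins down $\texttt{Remainder}(3,n)=1$, so $\widetilde n_1=n-1$, $\widetilde n_2=1$). With only $h-2=1$ nontrivial block and three pegs, the recursion in \texttt{FarthestMove} bottoms out quickly: its Spread phase moves $B_1$ once, its Reverse phase moves the single disk $B_{\max}$ across, shifting $B_1$ appropriately, and its Accumulate phase moves $B_1$ once more onto the destination. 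Tracking peg indices, one checks this is precisely the three-peg recursion $F(3,n)=3F(3,n-1)+1$ with $F(3,0)=0$, giving $F(3,n)=3^n-1$, matching the classical count for moving $n$ disks between the two farthest pegs of \texttt{Path$_3$}.

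Next I would analyze $G_{1\to2}(3,n)$, i.e., \texttt{GeneralMove}$(B,1,2,[1,3])$. Here $s=1$, $d=2$, $A=[1,3]$, so $A_{\min}=1$, $A_{\max}=3$; thus $s-A_{\min}=0$ (LeftSpread is empty), $A_{\max}-d=1$ (RightSpread has one iteration), and after that MoveRemainder, then LeftAccumulate has one iteration, then RightAccumulate is empty. With the two-block partition $(B_1,\{B_{\max}\})$: RightSpread calls \texttt{FarthestMove}$(B_1,1,3)$, i.e., it moves the $(n-1)$-disk block $B_1$ from peg~1 to peg~3 using all three pegs, costing $F(3,n-1)=3^{n-1}-1$. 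Then MoveRemainder moves the remaining block $B(2)=\{B_{\max}\}$ (a single disk) from peg~1 to peg~2 via \texttt{FarthestMove}, one move. Then LeftAccumulate calls \texttt{GeneralMove}$(B_1,2,3,[1,3])$ and appends its inverse; since the source~2 and destination~3 are \emph{not} the two ends of $A=[1,3]$, this is a genuine non-farthest move, and I would recurse: this is $G_{2\to3}(3,n-1)$, which by the left–right reflection symmetry of \texttt{Path$_3$} (and the fact that reversing a move-sequence reverses its effect, as noted in Section~\ref{sec2}) equals $G_{1\to2}(3,n-1)$. So we obtain the recurrence
\begin{equation*}
G_{1\to2}(3,n) \;=\; (3^{n-1}-1) \;+\; 1 \;+\; G_{1\to2}(3,n-1), \qquad G_{1\to2}(3,0)=0,
\end{equation*}
which telescopes to $G_{1\to2}(3,n)=\sum_{k=1}^{n}3^{k-1}=\tfrac{3^n-1}{2}=\tfrac12 F(3,n)$, as claimed.

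There is one subtlety I would make sure to handle cleanly: the recursive call $G_{2\to3}(3,n-1)$ invoked inside LeftAccumulate must itself be a legal \texttt{GeneralMove} instance, which requires (per the algorithm's preconditions) that if $d-s=1$ then the block has at most one disk — but here $d-s=1$ and the block $B_1$ has $n-1$ disks, so this would be problematic unless the partition of $B_1$ within that call keeps things consistent. In fact the resolution is that \texttt{GeneralMove}$(B_1,2,3,[1,3])$ has available set $A=[1,3]$ of size~3, so $|A|-1=2$ blocks, $d-s=1$ but $A_{\max}-d=0$ and $s-A_{\min}=1$, so it is LeftSpread that does the work, not a degenerate direct move; this is exactly the mirror image of the $G_{1\to2}$ computation, confirming the symmetry step. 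I expect this bookkeeping — verifying that every recursive call respects its preconditions and that the reflection symmetry $G_{2\to3}(3,m)=G_{1\to2}(3,m)$ is literally valid for the algorithm (not just for optimal solutions) — to be the only real obstacle; the arithmetic itself is a one-line telescoping sum. Alternatively, and perhaps more cleanly, I would simply cite that for $h=3$ both \texttt{GeneralMove} and \texttt{FarthestMove} reproduce the algorithm of \cite{ScGrSm44} move-for-move (as the paragraph preceding the lemma already asserts), and then the identity $G_{1\to2}(3,n)=\tfrac12 F(3,n)$ is immediate from the known \texttt{Path$_3$} move counts $3^n-1$ and $\tfrac{3^n-1}{2}$.
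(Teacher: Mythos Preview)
Your proposal is correct, and your final ``alternative'' is exactly what the paper does: the paper offers no proof beyond the sentence preceding the lemma, asserting that for $h=3$ \texttt{GeneralMove} coincides with the classical \texttt{Path$_3$} algorithm of \cite{ScGrSm44}, whose move counts $3^n-1$ and $\tfrac{3^n-1}{2}$ are known.

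Your longer derivation, unwinding the recursion explicitly, is more than the paper provides but is sound. Two small remarks. First, your stated recurrence $F(3,n)=3F(3,n-1)+1$ is a slip: the Reverse phase moves $B_{\max}$ twice (once before the loop and once inside), so the correct recurrence is $F(3,n)=3F(3,n-1)+2$, which indeed gives $F(3,n)=3^n-1$ as you claim. Second, the equality $G_{2\to3}(3,m)=G_{1\to2}(3,m)$ is not literally ``by symmetry'' of the algorithm (LeftSpread precedes RightSpread, so the two instances do not mirror each other move-for-move); rather, writing out both gives the coupled recurrences $G_{1\to2}(3,n)=F(3,n-1)+1+G_{2\to3}(3,n-1)$ and $G_{2\to3}(3,n)=G_{1\to2}(3,n-1)+1+F(3,n-1)$, from which the equality follows by a one-line induction. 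Your final telescoping is then correct.
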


\subsubsection{\large \texttt{Initial steps in the analysis of} \texttt{GeneralMove}}~\label{sec541}
In this section we analyze the algorithm \texttt{GeneralMove} for moving a
block $B$ in \texttt{Path}$_{h}$, $h \ge 3$, from peg $s$ to peg $d$,
$s < d$, using the set $A = [1,h]$ of available pegs. Let $h' =
s + h - d$. (Note that $h' < h$.)

Consider an index $j \in [s-1]$. At phase {\bf LeftSpread}, a {\bf
left-move} of block $B_j$ from peg $s$ to peg $j$ using $h-j+1$
available pegs is performed, requiring $G_{j \rightarrow s}(h-j+1,\widetilde{n}_j)$ moves. Similarly, at phase {\bf
RightAccumulate}, a {\bf right-move} of block $B_j$ from peg $j$ to
peg $d$ using $h-j+1$ available pegs is performed, requiring
$G_{j \rightarrow d}(h-j+1,\widetilde{n}_j)$ moves.

 Consider now an index $j \in [s,h'-1]$.
At phase {\bf RightSpread}, a {\bf far-move} of block $B_j$ from peg
$s$ to peg $s+h-j$ using $h-j+1$ available pegs is performed,
requiring $F(h-j+1,\widetilde{n}_j)$ moves. At phase {\bf
LeftAccumulate}, a {\bf left-move} of block $B_j$ from peg $s+h-j$
to peg $d$ using $h-j+1$ available pegs is performed, requiring
$G_{d \rightarrow s+h-j}(h-j+1,\widetilde{n}_j)$ moves.

The remainder $B(h')$ of blocks is moved in phase {\bf
MoveRemainder}, using a {\bf far-move} from peg $s$ to peg $d$,
which requires $F(h-h'+1,n(h'))$ moves.

The discussion above implies
\begin{lemma}~\label{general06}
\begin{eqnarray*}
\begin{array}{llll}
 G_{s \rightarrow d}(h,n) & = & &
 \displaystyle{\sum_{j=1}^{s-1} \left[G_{j \rightarrow s}(h-j+1,\widetilde{n}_j) + G_{j \rightarrow d}(h-j+1,\widetilde{n}_j)
 \right]} \cr
 & & + & \displaystyle{\sum_{j=s}^{h'-1} \left[F(h-j+1,\widetilde{n}_j) + G_{d \rightarrow
 s+h-j}(h-j+1,\widetilde{n}_j)\right]} \cr
 & & + & F(h-h'+1,n(h')).
\end{array}
\end{eqnarray*}
\end{lemma}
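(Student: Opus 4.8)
The plan is to prove Lemma~\ref{general06} by carefully tracking the moves performed in each of the five phases of \texttt{GeneralMove} (Algorithm~\ref{alg4222}) and summing their costs. First I would set up notation: recall $h' = s + h - d$, so that $h' - 1 = s - 1 + (A_{\max} - d)$ (using $A = [1,h]$), which means the blocks $B_1,\ldots,B_{h'-1}$ are precisely the ones moved off the segment $[s,d]$ during the spread phases, while $B(h') = \bigcup_{j=h'}^{h-1} B_j$ is the remainder moved directly from $s$ to $d$. I would verify that the $s - A_{\min} = s-1$ iterations of {\bf LeftSpread} handle blocks $B_1,\ldots,B_{s-1}$ and the $A_{\max} - d = h - d$ iterations of {\bf RightSpread} handle blocks $B_s,\ldots,B_{s + (h-d) - 1} = B_{h'-1}$, matching the two index ranges $[1,s-1]$ and $[s,h'-1]$ appearing in the lemma.

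Next I would account for each phase's cost. In {\bf LeftSpread}, iteration $j$ calls \texttt{GeneralMove}$(B_j, A_{\min}+j-1, s, [A_{\min}+j-1, A_{\max}])$ and appends its inverse; since move-sequence reversal preserves length (as noted in Section~\ref{sec2}, the peg structure being a graph), this costs $|B_j = B_{\min}+j-1 \to s|$ over $h - j + 1$ pegs, i.e.\ $G_{j \to s}(h-j+1, \widetilde{n}_j)$ (here I would note that by Lemma~\ref{blocksize04} the relevant sub-partition is consistent, so the recursive call really does operate on a block of size $\widetilde{n}_j$ with the block decomposition the analysis expects, though for the move count itself only $|B_j| = \widetilde{n}_j$ matters). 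Symmetrically {\bf RightAccumulate}, iteration $j$, moves $B_{s-j}$ from peg $s-j$ to $d$ over $[s-j, A_{\max}]$, contributing $G_{j \to d}(h-j+1,\widetilde{n}_j)$ after reindexing $j \mapsto s-j$; this gives the first summation $\sum_{j=1}^{s-1}[G_{j\to s}(\cdot) + G_{j\to d}(\cdot)]$. For {\bf RightSpread}, iteration $j$ calls \texttt{FarthestMove}$(B_{s-A_{\min}+j}, s, A_{\max}-j+1)$, a far-move over $h - (s-1+j) + 1$ pegs; reindexing so the moved block is $B_k$ with $k = s - 1 + j$ ranging over $[s, h'-1]$, this is $F(h-k+1,\widetilde{n}_k)$. {\bf LeftAccumulate}, iteration $j$, moves $B_{|A|+s-d-j}$ from peg $d+j$ to $d$ over $[s, d+j]$; matching indices with {\bf RightSpread} (block $B_k$ was spread to peg $s + h - k$, now returns to $d$ over $h-k+1$ pegs) yields $G_{d \to s+h-k}(h-k+1,\widetilde{n}_k)$, completing the second summation. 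Finally {\bf MoveRemainder} is a single \texttt{FarthestMove}$(B(|A|-d+s), s, d) = $\texttt{FarthestMove}$(B(h'), s, d)$ over $d - s + 1 = h - h' + 1$ pegs, contributing $F(h-h'+1, n(h'))$.

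Adding the four phase costs plus the remainder cost gives exactly the claimed formula; since the phases are executed sequentially via concatenation ($*$), the total length is the sum of the parts, and no move is double-counted (the one move $t_{B_{\max}, s, s+1}$ etc.\ that \texttt{FarthestMove} might view as straddling phases is internal to the recursive \texttt{FarthestMove}/\texttt{GeneralMove} calls here, unlike in Algorithm~\ref{alg4333}, so there is no boundary subtlety). I expect the main obstacle to be purely bookkeeping: getting the three different index conventions (the loop variable $j$ in each phase, the block subscript, and the ``number of available pegs'' $h-j+1$) to line up consistently, and in particular confirming via Lemma~\ref{blocksize04} that each recursive invocation partitions its argument block in the way the surrounding analysis implicitly assumes, so that the notation $G_{\cdot \to \cdot}(h-j+1, \widetilde{n}_j)$ and $F(h-j+1,\widetilde{n}_j)$ is well-defined and refers to the actual cost incurred. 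There is no genuine combinatorial difficulty here — the lemma is essentially a transcription of the algorithm — so I would keep the writeup brief, stating that the identity ``follows by inspecting Algorithm~\ref{alg4222}'' as the paper does, and only spell out the index matching for the two reindexed summations.
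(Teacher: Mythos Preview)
Your proposal is correct and follows essentially the same approach as the paper: the paper's ``proof'' of this lemma is precisely the phase-by-phase accounting in Section~\ref{sec541}, and you have reproduced that accounting with a bit more care about the reindexing between loop variables, block indices, and available-peg counts. Your additional remarks about reversal preserving length and about Lemma~\ref{blocksize04} ensuring consistency of the recursive partitions are sound refinements of what the paper leaves implicit.
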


\subsubsection{\large \texttt{Initial steps in the analysis of} \texttt{FarthestMove}}~\label{sec542}
In this section we analyze the algorithm \texttt{FarthestMove} for moving a
block $B$ from peg 1 to peg $h$ in \texttt{Path}$_{h}$, $h \ge 3$, using
the set $A=[1,h]$ of available pegs.

First, observe that the last block $B_{h-1}$, namely disk $B_{\max}$, performs $h-1$ moves.

Consider an index $j \in
[h-2]$. At each of the phases {\bf Spread}, {\bf Reverse}, and {\bf
Accumulate}, a {\bf far-move} of block $B_j$ with $h-j+1$ free
pegs is performed, requiring a total of $3F(h-j+1,\widetilde{n}_j)$ moves. Also, $j-1$ {\bf shifts} of block
$B_j$ with $h-j+1$ free pegs are performed at phase {\bf
Reverse}, requiring altogether $(j-1)G_{1 \rightarrow
2}(h-j+1,\widetilde{n}_j)$ moves.

For $1 \le i \le h-1$, denote by $F(h,n)|_{n(i)}$ the number of moves of the $n(i)$
largest disks in the course of performing the algorithm \texttt{FarthestMove}. The explanation in the preceding paragraph yields

\begin{lemma}\label{farthest07}
For any integers $h \ge 2$, $n \ge 1$, and $1 \le i \le h-1$,
\begin {itemize}
\item $F(h,n)|_{n(i)} =
 \displaystyle{\sum_{j=i}^{h-2} \left[ 3F(h-j+1,\widetilde{n}_j)
+ (j-1)G_{1 \rightarrow 2}(h-j+1,\widetilde{n}_j) \right] +
h-1}.$
\item $F(h,n) =
 \displaystyle{\sum_{j=1}^{i-1} \left[ 3F(h-j+1,\widetilde{n}_j)
+ (j-1)G_{1 \rightarrow 2}(h-j+1,\widetilde{n}_j) \right] +
F(h,n)|_{n(i)}}.$
\end {itemize}
\end{lemma}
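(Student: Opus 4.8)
The plan is to read both identities directly off the code of \texttt{FarthestMove} (Algorithm~\ref{alg4333}) by counting, block by block, all the moves it performs; the paragraph preceding the lemma already sketches this, so the proof only has to make the bookkeeping precise. The first step is to pin down the \emph{per-block cost}. Fix $j\in[h-2]$ and recall that here $s=1$, $d=h$, hence $d-s=h-1$. Tracking the peg intervals handed to the recursive calls, block $B_j$ is relocated by exactly three far-moves: in \textbf{Spread}, \texttt{FarthestMove}$(B_j,s,d-j+1)$ on $[s,d-j+1]$; in \textbf{Reverse}, the sequence $M_f=\texttt{FarthestMove}(B_j,s+j-1,d)$ on $[s+j-1,d]$ (used reversed); and in \textbf{Accumulate}, \texttt{FarthestMove}$(B_j,s+j-1,d)$ on $[s+j-1,d]$ — each of these intervals has size $h-j+1$, so the three far-moves contribute $3F(h-j+1,\widetilde n_j)$. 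In addition, \textbf{Reverse} shifts every block of index $>j$ one peg to the right in round $j$, so $B_j$ is shifted in rounds $1,\dots,j-1$, i.e.\ $j-1$ times; each such shift is a call \texttt{GeneralMove}$(B_j,\cdot,\cdot,A')$ with $|A'|=h-j+1$ and the move between the two topmost pegs of $A'$. By the reflection symmetry of \texttt{Path}$_{h-j+1}$ — under which \texttt{Partition} is invariant, as it depends only on block sizes — such a move costs the same as \texttt{GeneralMove} moving a block of size $\widetilde n_j$ from peg $1$ to peg $2$ on $h-j+1$ pegs, namely $G_{1\to2}(h-j+1,\widetilde n_j)$. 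Hence the per-block cost attributable to $B_j$ is $3F(h-j+1,\widetilde n_j)+(j-1)G_{1\to2}(h-j+1,\widetilde n_j)$.

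Second, the block $B_{h-1}=\{B_{\max}\}$ is never relocated by a recursive call; the largest disk advances one peg at a time, once by $t_{B_{\max},s,s+1}$ at the start of \textbf{Reverse} and once more at the end of each of the $h-2$ rounds of \textbf{Reverse}, for $h-1$ moves in all. Summing the contributions of $B_1,\dots,B_{h-2}$ and of $B_{h-1}$ yields
\[ F(h,n)=\sum_{j=1}^{h-2}\Bigl[3F(h-j+1,\widetilde n_j)+(j-1)G_{1\to2}(h-j+1,\widetilde n_j)\Bigr]+(h-1). \]

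Third, for the restricted count $F(h,n)|_{n(i)}$ I would identify which disks actually move. By Lemma~\ref{partition03} the non-empty blocks are ordered by weight and $B(i)=\bigcup_{j=i}^{h-1}B_j$, so ``the $n(i)$ largest disks'' are exactly the disks lying in $B_i,\dots,B_{h-1}$. Since \texttt{FarthestMove} only ever moves whole blocks, a disk of $B(i)$ is moved precisely during the block-moves of $B_i,\dots,B_{h-1}$, and during those moves no disk outside $B(i)$ is touched; therefore $F(h,n)|_{n(i)}$ equals the partial sum of the per-block costs over $j\in[i,h-2]$ together with the $h-1$ moves of $B_{h-1}$, which is the first claimed identity. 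The second identity is then just the split $\sum_{j=1}^{h-2}=\sum_{j=1}^{i-1}+\sum_{j=i}^{h-2}$ applied to the displayed formula for $F(h,n)$, recognising $\sum_{j=i}^{h-2}[\cdots]+(h-1)$ as $F(h,n)|_{n(i)}$. Finally I would check the extreme cases: for $i=1$ the prefix sum is empty and both identities reduce to the displayed formula (consistent with $n(1)=n$); for $i=h-1$ the suffix sum is empty, so $F(h,n)|_{n(h-1)}=h-1$, agreeing with $n(h-1)=\widetilde n_{h-1}=1$ (the lone disk $B_{\max}$) making $h-1$ moves; and the base $h=2$ works as well, since then $F(2,1)=1=h-1$.

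The only part of this that is not wholly mechanical is the peg-interval accounting — verifying that every recursive relocation of $B_j$ (far-move or shift) is handed exactly $h-j+1$ pegs, and that a one-peg shift at the top end of an $m$-peg interval costs $G_{1\to2}(m,\cdot)$ rather than, say, $F(m,\cdot)$. Since $s=1$ and $d=h$ these are precisely the short arithmetic checks indicated above, involving no real difficulty; everything else is summation and re-indexing.
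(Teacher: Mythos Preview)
Your proposal is correct and follows essentially the same approach as the paper: the paper's entire proof is the sentence ``The explanation in the preceding paragraph yields,'' referring to the paragraph that records, for each $j\in[h-2]$, the three far-moves of $B_j$ on $h-j+1$ pegs and the $j-1$ one-peg shifts, together with the $h-1$ single-disk moves of $B_{h-1}$. You carry out precisely this block-by-block count, only adding the explicit peg-interval arithmetic and the reflection-symmetry justification for the shift cost $G_{1\to 2}(h-j+1,\widetilde n_j)$ that the paper leaves implicit.
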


For the subsequent lemmas we put $m=h-k+1$, for $1 \le k \le h-1$.
\begin{lemma}\label{fnh08}
For any integers $h \ge 2$, $n\ge1$, and $1 \le k \le h-1$,
$$F(h,n)|_{n(k)} ~=~ F(m,n(k)) + (k-1)
\sum_{j=k}^{h-1} G_{1 \rightarrow 2}(h-j+1,\widetilde{n}_j).$$
\end {lemma}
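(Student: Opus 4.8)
The plan is to derive Lemma~\ref{fnh08} directly from Lemma~\ref{farthest07} by a reindexing argument; no new induction is needed. First I would write the first identity of Lemma~\ref{farthest07} with $i=k$,
$$F(h,n)|_{n(k)} = \sum_{j=k}^{h-2}\left[3F(h-j+1,\widetilde{n}_j) + (j-1)G_{1 \rightarrow 2}(h-j+1,\widetilde{n}_j)\right] + (h-1).$$
Since $m=h-k+1$ and $B(k)$ consists of $n(k)$ disks, the call $\texttt{Partition}(m,B(k))$ produces exactly the tail $(B_k,\ldots,B_{h-1})$ of $\texttt{Partition}(h,B)$; formally, by Lemma~\ref{blocksize04} with $i=k$, $\widetilde{n}_\ell(m,n(k)) = \widetilde{n}_{\ell+k-1}(h,n)$ for $1 \le \ell \le h-k$. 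Applying the same identity of Lemma~\ref{farthest07} to $F(m,n(k))$, with its own ``$i$'' equal to $1$ (so that the corresponding $n(i)$ is the whole block $n(k)$), gives
$$F(m,n(k)) = \sum_{\ell=1}^{m-2}\left[3F(m-\ell+1,\widetilde{n}_\ell(m,n(k))) + (\ell-1)G_{1 \rightarrow 2}(m-\ell+1,\widetilde{n}_\ell(m,n(k)))\right] + (m-1).$$

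Next I would substitute $j=\ell+k-1$ in this last sum. As $\ell$ runs over $[1,m-2]$, the index $j$ runs over $[k,h-2]$; moreover $m-\ell+1 = h-j+1$, $\widetilde{n}_\ell(m,n(k)) = \widetilde{n}_j(h,n)$, $\ell-1=j-k$, and $m-1=h-k$. Hence
$$F(m,n(k)) = \sum_{j=k}^{h-2}\left[3F(h-j+1,\widetilde{n}_j) + (j-k)G_{1 \rightarrow 2}(h-j+1,\widetilde{n}_j)\right] + (h-k).$$
Subtracting this from the expression for $F(h,n)|_{n(k)}$, the $3F(\cdot)$ terms cancel term by term, each $G_{1\rightarrow 2}$ term survives with coefficient $(j-1)-(j-k)=k-1$, and the constants contribute $(h-1)-(h-k)=k-1$, giving
$$F(h,n)|_{n(k)} - F(m,n(k)) = (k-1)\sum_{j=k}^{h-2} G_{1 \rightarrow 2}(h-j+1,\widetilde{n}_j) + (k-1).$$

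Finally I would absorb the trailing $(k-1)$ as the $j=h-1$ term of the sum. By Lemma~\ref{partition03}, $\widetilde{n}_{h-1}=|B_{h-1}|=1$, and moving a single disk between two neighboring pegs costs one move, so $G_{1 \rightarrow 2}(h-(h-1)+1,\widetilde{n}_{h-1}) = G_{1 \rightarrow 2}(2,1) = 1$ and thus $(k-1) = (k-1)\,G_{1 \rightarrow 2}(h-(h-1)+1,\widetilde{n}_{h-1})$. This extends the summation range from $[k,h-2]$ to $[k,h-1]$ and yields the claimed identity. The only delicate points are the reindexing bookkeeping — matching free-peg counts and block sizes under $j=\ell+k-1$ via Lemma~\ref{blocksize04} — and checking that the degenerate cases $k=h-1$ (empty sum, with $F(h,n)|_{n(h-1)}=h-1$ coming solely from the largest disk) and $h=2$ (where $k=1$ and the $(k-1)$ factor kills the second term) are handled; both are already covered by the displayed computation, so I expect this bookkeeping to be the main, though entirely routine, obstacle.
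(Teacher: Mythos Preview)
Your proof is correct and follows essentially the same approach as the paper: both expand $F(h,n)|_{n(k)}$ and $F(m,n(k))$ via Lemma~\ref{farthest07}, reindex the latter using Lemma~\ref{blocksize04}, and then absorb the leftover constant $(k-1)$ as the $j=h-1$ term of the sum via $G_{1\rightarrow 2}(2,\widetilde{n}_{h-1})=1$. The only cosmetic difference is that the paper first computes $F(m,n(k))$ in the reindexed form and then identifies it as $F(h,n)|_{n(k)}$ minus the shift sum, whereas you explicitly subtract the two expressions; the mathematical content is identical.
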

\proof By Lemma~\ref{farthest07} in
the particular case $i=1$, \\

$\begin{array}{llll}
 F(m,n(k)) & = & & \displaystyle{\sum_{j=1}^{m-2}} [3F(m-j+1,\widetilde{n}_j(m,n(k))) \cr
           &   &+&  (j-1)G_{1 \rightarrow 2}(m-j+1,\widetilde{n}_j(m,n(k)))] + h-k.
\end{array}$
\\

By Lemma~\ref{blocksize04}, we obtain
\begin{eqnarray}\label{eq:first}
\begin{array}{llll}
 F(m,n(k)) & = &   & \displaystyle{\sum_{j=1}^{m-2}} [3F(m-j+1,\widetilde{n}_{j+k-1}) \cr
           &   & + &  (j-1) G_{1 \rightarrow 2}(m-j+1,\widetilde{n}_{j+k-1})] ~+~ h-k \cr
           & = &   & \displaystyle{\sum_{j=k}^{h-2}} [3F(h-j+1,\widetilde{n}_{j})+(j-k)G_{1 \rightarrow 2}(h-j+1,\widetilde{n}_j)] + h-k.
\end{array}
\end{eqnarray}


Observe that $G_{1 \rightarrow 2}(2,\widetilde{n}_{h-1}) = 1$. Thus,
 by Lemma~\ref{farthest07}, the right-hand side of
(\ref{eq:first}) reduces to
\begin{eqnarray*} F(h,n)|_{n(k)} - (k-1)\sum_{j=k}^{h-1} G_{1 \rightarrow 2}(h-j+1,\widetilde{n}_j),
\end{eqnarray*} and we are done. \QED

Lemma~\ref{farthest07} and Lemma \ref{fnh08} imply
\begin{corollary} \label{c:farh}
For any integers $h \ge 2, n \ge 1$, and $1 \le k \le h-1$,
\begin {eqnarray*}
\begin{array}{llll}
 F(h,n) & = & & \displaystyle{\sum_{j=1}^{k-1} [3F(h-j+1,\widetilde{n}_j)
 +(j-1)G_{1 \rightarrow 2}(h-j+1,\widetilde{n}_j)]} \cr
        &   & + & \displaystyle{F(m,n(k)) + (k-1)\sum_{j=k}^{h-1} G_{1 \rightarrow 2}(h-j+1,\widetilde{n}_j)}.
\end{array}
\end{eqnarray*}
\end{corollary}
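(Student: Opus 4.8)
The statement to prove is Corollary~\ref{c:farh}, which expresses $F(h,n)$ by splitting off the first $k-1$ blocks explicitly and folding the remaining work into $F(m,n(k))$ plus a correction term. The plan is to simply combine the two parts of Lemma~\ref{farthest07} with Lemma~\ref{fnh08}.

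First I would invoke the second bullet of Lemma~\ref{farthest07} with $i=k$, which gives
$$F(h,n) = \sum_{j=1}^{k-1}\left[3F(h-j+1,\widetilde{n}_j) + (j-1)G_{1\rightarrow 2}(h-j+1,\widetilde{n}_j)\right] + F(h,n)|_{n(k)}.$$
This already produces the first summation in the target formula verbatim, so all that remains is to rewrite the term $F(h,n)|_{n(k)}$.

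Next I would substitute the identity from Lemma~\ref{fnh08}, namely
$$F(h,n)|_{n(k)} = F(m,n(k)) + (k-1)\sum_{j=k}^{h-1} G_{1\rightarrow 2}(h-j+1,\widetilde{n}_j),$$
valid since $m = h-k+1$ as fixed just before Lemma~\ref{fnh08}. Plugging this into the previous display yields exactly the claimed expression. The only bookkeeping points to check are that the index ranges line up ($j$ from $1$ to $k-1$ in the explicit part, $j$ from $k$ to $h-1$ in the correction sum) and that the hypotheses of the two cited results are met, which they are: $h \ge 2$, $n \ge 1$, and $1 \le k \le h-1$ are precisely the hypotheses of both Lemma~\ref{farthest07} (for the choice $i=k$) and Lemma~\ref{fnh08}.

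There is essentially no obstacle here — the corollary is a one-line formal consequence of the two preceding lemmas, so the "proof" is just the concatenation of the two substitutions above. If anything needs a word of care, it is merely noting that the edge cases $k=1$ (empty first sum and empty correction sum, recovering $F(h,n) = F(h,n)|_{n(1)} = F(h,n)$) and $k = h-1$ are consistent with the conventions on empty sums, but this is routine and need not be belabored.
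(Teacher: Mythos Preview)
Your proposal is correct and is exactly the approach the paper takes: the corollary is stated there simply as an immediate consequence of Lemma~\ref{farthest07} and Lemma~\ref{fnh08}, with no additional argument given. Your write-up just spells out the two substitutions explicitly.
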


\subsubsection{\large Moving from one End to the Other is the most Costly}
The following statement shows that \texttt{GeneralMove} requires the maximal number of moves when the source
and destination pegs are at the extreme ends of the set $A$.
\begin {proposition}\label{l:farh}
For integers $h \ge 3,n \ge 1,s,d$, such that $1 \! \le \! s \! < \! d \! \le \! h$ and $d - s + 1 < h$,
$$G_{s \rightarrow d}(h,n) < F(h,n).$$
\end {proposition}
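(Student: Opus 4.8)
The plan is to compare the exact move-counts furnished by Lemma~\ref{general06} (for \texttt{GeneralMove}) and Corollary~\ref{c:farh} (for \texttt{FarthestMove}), using the fact that the partition $(B_1,\ldots,B_{h-1})$ produced by \texttt{Partition} depends only on $h$ and $B$, hence is the \emph{same} in both formulas. Put $h' = s+h-d$. The hypothesis $d-s+1<h$ is precisely the statement $h'\ge 2$, and $s<d$ gives $h'\le h-1$, so $k=h'$ is an admissible choice in Corollary~\ref{c:farh}; with it, $m=h-h'+1=d-s+1$, and the term $F(m,n(h'))$ in Corollary~\ref{c:farh} coincides verbatim with the \textbf{MoveRemainder} term $F(h-h'+1,n(h'))$ in Lemma~\ref{general06}. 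Subtracting, it remains to show that
\[
\sum_{j=1}^{h'-1}\bigl[3F(h-j+1,\widetilde{n}_j)+(j-1)G_{1\rightarrow 2}(h-j+1,\widetilde{n}_j)\bigr]+(h'-1)\sum_{j=h'}^{h-1}G_{1\rightarrow 2}(h-j+1,\widetilde{n}_j)
\]
strictly exceeds
\[
\sum_{j=1}^{s-1}\bigl[G_{j\rightarrow s}(h-j+1,\widetilde{n}_j)+G_{j\rightarrow d}(h-j+1,\widetilde{n}_j)\bigr]+\sum_{j=s}^{h'-1}\bigl[F(h-j+1,\widetilde{n}_j)+G_{d\rightarrow s+h-j}(h-j+1,\widetilde{n}_j)\bigr].
\]

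I would prove this by strong induction on $n+h$, with two directly-verified base situations: $n=1$ (where $F(h,1)=h-1$ while $G_{s\rightarrow d}(h,1)=d-s<h-1$, using $d-s+1<h$), and $h=3$ (where $G_{1\rightarrow 2}(3,n)=\tfrac12 F(3,n)<F(3,n)$ by Lemma~\ref{classical05}, and $G_{2\rightarrow 3}(3,n)=G_{1\rightarrow 2}(3,n)$ by the left--right symmetry of \texttt{Path$_3$}). The only consequence of the induction hypothesis that is needed is the weaker inequality $G_{a\rightarrow b}(H,N)\le F(H,N)$ for all $1\le a<b\le H$: when $b-a+1=H$ it is an equality, since \texttt{GeneralMove} reduces to \texttt{FarthestMove} in that case; otherwise it is the Proposition applied to a genuinely smaller instance --- either with fewer pegs ($H=h-j+1<h$ when $j\ge 2$) or with strictly fewer disks ($N=\widetilde{n}_j\le n-1<n$ when $j=1$, invoking $|B_j|\le n-1$ from Lemma~\ref{partition03}). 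Granting this, each summand of the second display is at most $2F(h-j+1,\widetilde{n}_j)$: for $1\le j\le s-1$, both $G_{j\rightarrow s}$ and $G_{j\rightarrow d}$ are bounded by $F(h-j+1,\widetilde{n}_j)$; for $s\le j\le h'-1$, the \textbf{RightSpread} far-move contributes exactly $F(h-j+1,\widetilde{n}_j)$ and the \textbf{LeftAccumulate} move $G_{d\rightarrow s+h-j}$ at most $F(h-j+1,\widetilde{n}_j)$. Matching against the $3F(h-j+1,\widetilde{n}_j)$ in the first display leaves a non-negative slack of $F(h-j+1,\widetilde{n}_j)+(j-1)G_{1\rightarrow 2}(h-j+1,\widetilde{n}_j)$ in every position.

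Strictness is then supplied for free by the leftover block $(h'-1)\sum_{j=h'}^{h-1}G_{1\rightarrow 2}(h-j+1,\widetilde{n}_j)$ of the first display, which has no counterpart in the second: its coefficient $h'-1$ is $\ge 1$ because $h'\ge 2$, and the summand at $j=h-1$ equals $G_{1\rightarrow 2}(2,\widetilde{n}_{h-1})=G_{1\rightarrow 2}(2,1)=1$ since $B_{h-1}=\{B_{\max}\}$. Hence $F(h,n)-G_{s\rightarrow d}(h,n)\ge h'-1=s+h-d-1\ge 1$, which is the conclusion (and slightly more).

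I expect no deep obstacle: the real work is bookkeeping --- lining up the index ranges and the peg-relabelings between Lemma~\ref{general06} and Corollary~\ref{c:farh}, and checking that each auxiliary move-count used in the termwise estimate is honestly a strictly smaller instance, so the induction on $n+h$ is well-founded. The single essential use of $d-s+1<h$ is that it forces $h'\ge 2$, which is exactly what makes the leftover $G_{1\rightarrow 2}$-block nonempty and turns $\le$ into $<$.
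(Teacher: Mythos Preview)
Your proof is correct and follows essentially the same route as the paper: both arguments plug $k=h'$ into Corollary~\ref{c:farh} so that the $F(h-h'+1,n(h'))$ terms cancel against the \textbf{MoveRemainder} term of Lemma~\ref{general06}, bound every remaining $G$-term by the corresponding $F$-term via the induction hypothesis (using Lemma~\ref{partition03} to ensure $\widetilde{n}_j\le n-1$), and extract strictness from the tail $(h'-1)\sum_{j=h'}^{h-1}G_{1\rightarrow 2}(h-j+1,\widetilde{n}_j)\ge h'-1\ge 1$ via $G_{1\rightarrow 2}(2,\widetilde{n}_{h-1})=1$. The only cosmetic differences are that the paper inducts on $n$ alone (uniformly in $h$) and does not single out $h=3$ as a base case, whereas you induct on $n+h$ and add that harmless extra base; your explicit quantitative form $F(h,n)-G_{s\rightarrow d}(h,n)\ge h'-1$ is a mild sharpening of what the paper states.
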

\begin {proof}
Denote $h''=h-h'+1$. The proof is by induction on $n$, for all values of $h \ge 3$.
\\ For $n=1$, we have $\widetilde{n}_j = 0$ for each $1 \le j \le h-2$.
Hence by Lemma~\ref{general06}, $$G_{s \rightarrow d}(h,1)
~=~ F(h'',1) ~=~ h''-1 ~=~ d-s ~<~ h-1 ~=~ F(h,1).$$
We assume that the statement holds for less than $n$ disks and all $h \ge 3$, and prove it for $n$ disks and all $h \ge 3$. Observe that
$$1 ~<~ s+h-d ~=~ h' ~\le~ h-1.$$
By Lemma~\ref{partition03}, for each $1 \le j \le h'-1$, we have
$\widetilde{n}_j < n$. Thus, by Lemma~\ref{general06} and the
induction hypothesis,
\begin{eqnarray*}
G_{s \rightarrow d}(h,n) &\le&
\sum_{j=1}^{h'-1}  2F(h-j+1,\widetilde{n}_j) + F(h'',n(h'))
 \\  &<&  \sum_{j=1}^{h'-1}  2F(h-j+1,\widetilde{n}_j) + F(h'',n(h'))
 + (h'-1).
\end{eqnarray*}
Since $h' \le h-1$ and $\widetilde{n}_{h-1} = 1$:
$$\sum_{j=h'}^{h-1} G_{1 \rightarrow 2}(h-j+1,\widetilde{n}_j) ~\ge~
G_{1 \rightarrow 2}(2,\widetilde{n}_{h-1}) ~=~ 1.$$
Thus by Corollary~\ref{c:farh},
\begin {eqnarray*}
\begin{array}{lll}
F(h,n) & \ge & \displaystyle{\sum_{j=1}^{h'-1} 3F(h-j+1,\widetilde{n}_j)} + F(h'',n(h')) + (h'-1)\sum_{j=h'}^{h-1} G_{1 \rightarrow 2}(h-j+1,\widetilde{n}_j) \cr
       & \ge & \displaystyle{\sum_{j=1}^{h'-1} 2F(h-j+1,\widetilde{n}_j) + F(h'',n(h')) + (h'-1)} \cr
       &  > & G_{s \rightarrow d}(h,n). $\QED$
\end{array}
\end {eqnarray*}
\end {proof}

\subsection{Proof of Theorem~\ref{tp1}\label{S:Up}}

\subsubsection{\large Auxiliary statements}
\begin{lemma}\label{ratio09}
For any integers $h \ge 4$ and $n \ge 1$,
$$G_{1 \rightarrow 2}(h,n) \le \frac{2}{3}F(h,n)-1.$$
\end{lemma}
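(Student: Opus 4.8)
The plan is to prove this directly, without induction, by confronting the two structural decompositions already available: one for $G_{1\to2}(h,n)$ coming from the analysis of \texttt{GeneralMove}, and one for $F(h,n)$ coming from the analysis of \texttt{FarthestMove}. The crucial point making this work is that, when run on $A=[1,h]$, both algorithms partition the block $B$ of size $n$ via the \emph{same} procedure \texttt{Partition}($h,B$) (Algorithm~\ref{alg42}), so the two decompositions are sums over \emph{matching} terms indexed by the block sizes $\widetilde n_j = \widetilde n_j(h,n)$, with $\widetilde n_{h-1}=1$.

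First I would specialize Lemma~\ref{general06} to $s=1$, $d=2$. Then $h' = s+h-d = h-1$, the sum over $j\in[s-1]$ is empty, the only block reaching the ``\textbf{MoveRemainder}'' phase is the single largest disk $B_{\max}$ so $n(h')=\widetilde n_{h-1}=1$, and the last term is $F(2,1)=1$. This gives
$G_{1\to2}(h,n) = \sum_{j=1}^{h-2}\bigl[F(h-j+1,\widetilde n_j) + G_{2\to h-j+1}(h-j+1,\widetilde n_j)\bigr] + 1$.
Next I would take $i=1$ in Lemma~\ref{farthest07} (so that $F(h,n)=F(h,n)|_{n(1)}$), which yields
$F(h,n) = \sum_{j=1}^{h-2}\bigl[3F(h-j+1,\widetilde n_j) + (j-1)\,G_{1\to2}(h-j+1,\widetilde n_j)\bigr] + (h-1)$.

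The one genuinely non-bookkeeping ingredient is a termwise comparison: for every $j\in[h-2]$, $G_{2\to h-j+1}(h-j+1,\widetilde n_j) \le F(h-j+1,\widetilde n_j)$. If $\widetilde n_j=0$ both sides are $0$. If $\widetilde n_j\ge 1$, note that $h-j+1\ge 3$ (since $j\le h-2$, so no degenerate $2$-peg case ever occurs), and the source peg $2$ and destination peg $h-j+1$ span only $h-j < h-j+1$ pegs; hence Proposition~\ref{l:farh} applies and gives the strict inequality. (This also handles the boundary term $j=h-2$, i.e.\ $G_{2\to3}(3,\widetilde n_{h-2}) < F(3,\widetilde n_{h-2})$, so Lemma~\ref{classical05} is not even needed.)

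Finally I would combine: multiplying the first identity by $3$ and applying the termwise comparison gives $3\,G_{1\to2}(h,n) \le 6\sum_{j=1}^{h-2}F(h-j+1,\widetilde n_j) + 3$; multiplying the second identity by $2$ and discarding the nonnegative terms $(j-1)G_{1\to2}(h-j+1,\widetilde n_j)$ gives $2\,F(h,n) - 3 \ge 6\sum_{j=1}^{h-2}F(h-j+1,\widetilde n_j) + (2h-5)$. Since $h\ge 4$ forces $2h-5\ge 3$, chaining these two inequalities yields $3\,G_{1\to2}(h,n) \le 2\,F(h,n)-3$, i.e.\ $G_{1\to2}(h,n)\le \tfrac23 F(h,n)-1$. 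The only real ``obstacle'' is the bookkeeping: checking that the two sums have the same coefficient ($6$) in front of $\sum_j F(h-j+1,\widetilde n_j)$, so that everything collapses to the scalar inequality $2h-5\ge 3$, and verifying that Proposition~\ref{l:farh} is legitimately invocable on each individual term (which is exactly the condition $h-j+1\ge 3$, automatic here).
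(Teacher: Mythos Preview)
Your proof is correct and follows essentially the same route as the paper: specialize Lemma~\ref{general06} to $s=1,d=2$, bound each $G_{2\to h-j+1}$ term by $F(h-j+1,\widetilde n_j)$ via Proposition~\ref{l:farh}, and compare against the decomposition of $F(h,n)$ from Lemma~\ref{farthest07} (the paper uses the equivalent Corollary~\ref{c:farh} with $k=h-1$) after discarding the nonnegative $(j-1)G_{1\to2}$ terms. The only difference is cosmetic arithmetic---the paper writes $F(h,n)\ge 3\sum_j F(h-j+1,\widetilde n_j)+3$ and divides by $3/2$, whereas you scale by $3$ and $2$ and compare constants $2h-5\ge 3$---and you are slightly more careful in explicitly treating the $\widetilde n_j=0$ case.
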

\begin {proof}
By Lemma~\ref{general06},
\begin{eqnarray*}
G_{1 \rightarrow 2}(h,n)  &=&  \sum_{j=1}^{h-2}
\left[F(h-j+1,\widetilde{n}_j) + G_{2 \rightarrow
h-j+1}(h-j+1,\widetilde{n}_j)\right] ~+~ F(2,n(h-1)).
\end{eqnarray*}
Note that $F(2,n(h-1)) =1$. Thus by Proposition \ref{l:farh},
\begin{eqnarray*}
G_{1 \rightarrow 2}(h,n)  &\le&  \sum_{j=1}^{h-2} 2F(h-j+1,\widetilde{n}_j)+1.
\end{eqnarray*}
Note that $h-1 \ge 3$. By Corollary \ref{c:farh} in the particular
case $k=h-1$,
\begin {eqnarray*}
F(h,n) &\ge& \sum_{j=1}^{h-2} 3F(h-j+1,\widetilde{n}_j)
 + F(2,n(h-1)) + (h-2)G_{1 \rightarrow 2}(2,\widetilde{n}_{h-1}) \cr
       &\ge& \sum_{j=1}^{h-2} 3F(h-j+1,\widetilde{n}_j) + 1 + h-2 \cr
       &\ge& \sum_{j=1}^{h-2} 3F(h-j+1,\widetilde{n}_j) + 3.
\end{eqnarray*}
Altogether,
\begin{eqnarray*}
G_{1 \rightarrow 2}(h,n) ~\le~ \sum_{j=1}^{h-2} 2F(h-j+1,\widetilde{n}_j)
 + 1 ~\le~ \frac{2}{3}F(h,n) - 1.
 \end{eqnarray*}
\QED \end {proof}

\begin{lemma}~\label{tshieit10}
For any integers $h \ge 4$ and $n \ge 2$,
$$ \sum_{j=1}^{h-1}G_{1 \rightarrow 2}(h-j+1,\widetilde{n}_j) \le \frac{2}{9}F(h,n).$$
\end{lemma}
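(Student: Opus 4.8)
The plan is to split the sum according to the block index $j$, handle the three natural regimes of $j$ separately, and then absorb everything into $\tfrac29 F(h,n)$ by exploiting the factor‑$3$ structure of $F(h,n)$ recorded in Corollary~\ref{c:farh}. First I would peel off the two largest values of $j$: since $\widetilde{n}_{h-1}=1$ (Lemma~\ref{partition03}), the $j=h-1$ term is $G_{1 \rightarrow 2}(2,1)=1$; and since $h-j+1=3$ when $j=h-2$, Lemma~\ref{classical05} gives that the $j=h-2$ term equals $\tfrac12 F(3,\widetilde{n}_{h-2})$. For $1\le j\le h-3$ we have $h-j+1\ge4$, so Lemma~\ref{ratio09} applies when $\widetilde{n}_j\ge1$ and yields $G_{1 \rightarrow 2}(h-j+1,\widetilde{n}_j)\le\tfrac23 F(h-j+1,\widetilde{n}_j)-1$, while for $\widetilde{n}_j=0$ both $G_{1 \rightarrow 2}$ and $F$ vanish. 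Writing $r$ for the number of $j\in[1,h-3]$ with $\widetilde{n}_j\ge1$, I obtain
$$\sum_{j=1}^{h-1}G_{1 \rightarrow 2}(h-j+1,\widetilde{n}_j)\ \le\ \tfrac23\sum_{j=1}^{h-3}F(h-j+1,\widetilde{n}_j)+\tfrac12 F(3,\widetilde{n}_{h-2})+1-r.$$

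The second ingredient is a lower bound on $F(h,n)$. Applying Corollary~\ref{c:farh} with $k=h-1$ (so $m=2$), using $n(h-1)=\widetilde{n}_{h-1}=1$, $F(2,1)=1$ and $G_{1 \rightarrow 2}(2,1)=1$, and discarding the non‑negative terms $\sum_j(j-1)G_{1 \rightarrow 2}(\cdot)$, I get $F(h,n)\ge 3\sum_{j=1}^{h-2}F(h-j+1,\widetilde{n}_j)+(h-1)$, hence $\sum_{j=1}^{h-3}F(h-j+1,\widetilde{n}_j)\le\tfrac13 F(h,n)-F(3,\widetilde{n}_{h-2})-\tfrac{h-1}{3}$. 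Substituting this into the display above and collecting terms, the left‑hand side is bounded by $\tfrac29 F(h,n)+\bigl(1-r-\tfrac16 F(3,\widetilde{n}_{h-2})-\tfrac{2(h-1)}{9}\bigr)$, so the whole argument reduces to checking that the parenthesized correction term is at most $0$.

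I expect this last point to be the only real obstacle, and it comes down to showing $r\ge1$, i.e.\ that at least one of the blocks $B_1,\dots,B_{h-3}$ is non‑empty. I would derive this from the definition of \texttt{Partition}: the sequence $n_1,n_2,\dots$ it produces satisfies $n_{j+1}=m_j=n_j-\widetilde{n}_j$, so if $\widetilde{n}_j=0$ for every $j\le h-3$ then this sequence is constant, and in particular $\texttt{Remainder}(4,n)=n$; a short inspection of Algorithm~\ref{alg413} shows this happens only for $n=1$, contradicting $n\ge2$. Hence $r\ge1$, so $1-r\le0$; since moreover $F(3,\widetilde{n}_{h-2})\ge0$ and $\tfrac{2(h-1)}{9}>0$ for $h\ge4$, the correction term is at most $-\tfrac{2(h-1)}{9}<0$, which finishes the proof. (The small arithmetic facts $F(3,0)=0$, $F(2,1)=1$ and $G_{1 \rightarrow 2}(2,1)=1$ used along the way are immediate from the definitions, and one should note that the bound obtained is in fact essentially tight already at $h=4,n=2$.)
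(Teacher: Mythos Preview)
Your proof is correct and uses the same core ingredients as the paper: Lemma~\ref{ratio09} for the terms with $j\le h-3$, Lemma~\ref{classical05} for $j=h-2$, the trivial value at $j=h-1$, and the lower bound $F(h,n)\ge 3\sum_{j=1}^{h-2}F(h-j+1,\widetilde n_j)+(h-1)$ coming from Corollary~\ref{c:farh} (equivalently Lemma~\ref{farthest07}). The only real difference is in the endgame. You introduce $r=\#\{j\le h-3:\widetilde n_j\ge1\}$ and show $r\ge1$ by tracing the \texttt{Partition} recursion down to the step where $h-j+1=4$ and observing $\texttt{Remainder}(4,n)<n$ for $n\ge2$; this lets you finish in one stroke. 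The paper instead uses only the trivial fact that some $\widetilde n_j\ge1$ for $j\le h-2$ (since $\sum_{j\le h-2}\widetilde n_j=n-1\ge1$) and then splits into two cases according to whether the witness is at $j=h-2$ or at some $j\le h-3$, absorbing the stray $+1$ either via $\tfrac12 F(3,\widetilde n_{h-2})\le\tfrac23 F(3,\widetilde n_{h-2})-\tfrac13$ (Case~1) or via the ``$-1$'' in Lemma~\ref{ratio09} (Case~2). Your packaging avoids the case split at the price of a short detour through the definition of \texttt{Remainder}; the paper's packaging avoids that detour at the price of the case split. Both arrive at the same inequality with essentially the same arithmetic.
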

\begin{proof}
By Lemma~\ref{farthest07},
$$F(h,n) = \sum_{j=1}^{h-2} \left[ 3F(h-j+1,\widetilde{n}_j)
+ (j-1)G_{1 \rightarrow 2}(h-j+1,\widetilde{n}_j) \right] +
h-1.$$
Note that $\frac{2}{9}(h-1) \ge \frac{2}{3}$. Hence,
\begin{eqnarray} \label{e:fn}
\frac{2}{9}F(h,n) &\ge&
 \frac{2}{3} \left(\sum_{j=1}^{h-2} F(h-j+1,\widetilde{n}_j)
 + 1\right) = \frac{2}{3}\sum_{j=1}^{h-1}
 F(h-j+1,\widetilde{n}_j).
 \end{eqnarray}
We claim that
 \begin{equation} \label{neweq}
 G_{1 \rightarrow 2}(h-j+1,\widetilde{n}_j) ~\le~ \frac{2}{3}
F(h-j+1,\widetilde{n}_j), \qquad 1 \le j \le h-3.
\end {equation}
To this end, note that by Lemma~\ref{ratio09} this clearly holds if
$\widetilde{n}_j \ge 1$. Otherwise $\widetilde{n}_j = 0$, so
$$G_{1 \rightarrow 2}(h-j+1,\widetilde{n}_j) = 0 = \frac{2}{3}F(h-j+1,\widetilde{n}_j).$$

Since $n \ge 2$ and $\widetilde{n}_{h-1} = 1$, there exists a number $j$ with
$1 \le j \le h-2$ such
 that $\widetilde{n}_j \ge 1$.
\\The analysis splits into two cases.
\\\emph{Case 1: $\widetilde{n}_{h-2} \ge 1$.}

By (\ref{neweq}),
$$ \sum_{j=1}^{h-3} G_{1 \rightarrow 2}(h-j+1,\widetilde{n}_j) ~\le~ \frac{2}{3}
\sum_{j=1}^{h-3} F(h-j+1,\widetilde{n}_j).$$
Note that
$F(3,\widetilde{n}_{h-2}) \ge 2$. Thus, by Lemma~\ref{classical05},
$$G_{1 \rightarrow 2}(3,\widetilde{n}_{h-2}) ~=~ \frac{1}{2}F(3,\widetilde{n}_{h-2}) ~\le~ \frac{2}{3}F(3,\widetilde{n}_{h-2}) - \frac{1}{3}.$$
Altogether, we have
\begin{eqnarray} \label{e:case1}
\begin{array}{lll}
\displaystyle{\sum_{j=1}^{h-1} G_{1 \rightarrow 2}(h-j+1,\widetilde{n}_j)}
 & = & \displaystyle{\sum_{j=1}^{h-3} G_{1 \rightarrow 2}(h-j+1,\widetilde{n}_j)
 + G_{1 \rightarrow 2}(3,\widetilde{n}_{h-2}) + 1} \cr
 &\le& \displaystyle{\frac{2}{3}\sum_{j=1}^{h-3} F(h-j+1,\widetilde{n}_j) + \frac{2}{3}F(3,\widetilde{n}_{h-2}) + \frac{2}{3}} \cr
 & = & \displaystyle{\frac{2}{3}\sum_{j=1}^{h-1} F(h-j+1,\widetilde{n}_j)}.
\end{array}
\end{eqnarray}
By (\ref{e:fn}), the right-hand side of (\ref{e:case1}) is no
greater than $\frac{2}{9}F(h,n)$, as required.
\\\emph{Case 2: $\widetilde{n}_i \ge 1$ for some $i$ where $1 \le i \le h-3$.}

By (\ref{neweq}) and Lemma~\ref{classical05},
$$\sum_{j \in [h-2] \setminus i} G_{1 \rightarrow 2}(h-j+1,\widetilde{n}_j) ~\le~ \frac{2}{3}\sum_{j \in [h-2]
\setminus i} F(h-j+1,\widetilde{n}_j).$$

By Lemma~\ref{ratio09},
$$ G_{1 \rightarrow 2}(h-i+1,\widetilde{n}_i) ~\le~ \frac{2}{3}
F(h-i+1,\widetilde{n}_i)-1.$$
Altogether, we have
\begin{eqnarray} \label{e:case2}
\begin{array}{lll}
\displaystyle{\sum_{j=1}^{h-1} G_{1\rightarrow 2}(h-j+1,\widetilde{n}_j)}
 & = & \displaystyle{\sum_{j \in [h-2] \setminus \{i\}} G_{1 \rightarrow 2}(h-j+1,\widetilde{n}_j)}
 + G_{1 \rightarrow 2}(h-i+1,\widetilde{n}_{i}) + 1 \cr
 & \le & \displaystyle{\frac{2}{3}\sum_{j=1}^{h-2} F(h-j+1,\widetilde{n}_j)}.
\end{array}
\end{eqnarray}
By (\ref{e:fn}), the right-hand side of (\ref{e:case2}) is strictly
less than $\frac{2}{9}F(h,n)$, and we are done. \QED
\end{proof}
\begin{lemma}\label{nlargeh11}
For any integers $h \ge 5$ and $n \ge h$,
$$F(h,n) \le 3F(h,\widetilde{n}_1) + \frac{11}{9}F(h-1,n-\widetilde{n}_1).$$
\end{lemma}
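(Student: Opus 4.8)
The plan is to start from Corollary~\ref{c:farh} specialized to $k=2$, which is tailored exactly for this statement: with $k=2$ we have $m=h-k+1=h-1$ and $n(2)=|B(2)|=\sum_{j=2}^{h-1}\widetilde{n}_j=n-\widetilde{n}_1$, so the corollary reads
$$F(h,n) = 3F(h,\widetilde{n}_1) + F(h-1, n-\widetilde{n}_1) + \sum_{j=2}^{h-1} G_{1\rightarrow2}(h-j+1,\widetilde{n}_j).$$
Since $1+\tfrac29=\tfrac{11}{9}$, it suffices to bound the leftover sum $\sum_{j=2}^{h-1} G_{1\rightarrow2}(h-j+1,\widetilde{n}_j)$ by $\tfrac29 F(h-1,n-\widetilde{n}_1)$.

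The key observation is that this leftover sum is precisely the sum occurring in Lemma~\ref{tshieit10}, but for a system with $h-1$ pegs and $n-\widetilde{n}_1$ disks. First I would re-index using Lemma~\ref{blocksize04} with $i=2$: it gives $\widetilde{n}_{j+1}(h,n)=\widetilde{n}_j(h-1,n(2))=\widetilde{n}_j(h-1,n-\widetilde{n}_1)$ for $1\le j\le h-2$. Substituting $j\mapsto j+1$ and noting $h-(j+1)+1=(h-1)-j+1$, the leftover sum becomes $\sum_{j=1}^{(h-1)-1} G_{1\rightarrow2}((h-1)-j+1,\widetilde{n}_j(h-1,n-\widetilde{n}_1))$, which is exactly the left-hand side of Lemma~\ref{tshieit10} with parameters $h-1$ and $n-\widetilde{n}_1$.

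To apply Lemma~\ref{tshieit10} with those parameters, I need to check $h-1\ge4$ and $n-\widetilde{n}_1\ge2$. The first holds because $h\ge5$. For the second, recall from the definition of \texttt{Partition} that $m_1=\texttt{Remainder}(h,n)$ and $\widetilde{n}_1=n_1-m_1=n-\texttt{Remainder}(h,n)$, so $n-\widetilde{n}_1=\texttt{Remainder}(h,n)$; and since $h\ge5\ge4$ and $n\ge h$, Lemma~\ref{remainder02} yields $\texttt{Remainder}(h,n)\ge2$. Hence Lemma~\ref{tshieit10} gives $\sum_{j=2}^{h-1} G_{1\rightarrow2}(h-j+1,\widetilde{n}_j)\le\tfrac29 F(h-1,n-\widetilde{n}_1)$, and plugging this into the displayed identity gives $F(h,n)\le 3F(h,\widetilde{n}_1)+\bigl(1+\tfrac29\bigr)F(h-1,n-\widetilde{n}_1)$, which is the claim.

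There is essentially no substantive obstacle: the argument is bookkeeping built on Corollary~\ref{c:farh}, Lemma~\ref{blocksize04}, Lemma~\ref{tshieit10}, and Lemma~\ref{remainder02}. The only points deserving care are (i) getting the re-indexing in the second paragraph right so the leftover sum literally matches the Lemma~\ref{tshieit10} sum for $(h-1,n-\widetilde{n}_1)$, and (ii) verifying the side condition $n-\widetilde{n}_1\ge2$, which is the sole place where the hypothesis $n\ge h$ (rather than just $n\ge1$) is actually needed.
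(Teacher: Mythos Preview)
Your proposal is correct and follows essentially the same route as the paper: specialize Corollary~\ref{c:farh} to $k=2$, re-index the leftover sum via Lemma~\ref{blocksize04} with $i=2$, verify the hypotheses $h-1\ge4$ and $n-\widetilde{n}_1=\texttt{Remainder}(h,n)\ge2$ using Lemma~\ref{remainder02}, and then apply Lemma~\ref{tshieit10} with parameters $(h-1,n-\widetilde{n}_1)$. The paper's proof is identical in structure and in the lemmas invoked.
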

\begin{proof}
By Corollary \ref{c:farh} in the particular case $k=2$,
\begin{eqnarray*}
F(h,n) &=& 3F(h,\widetilde{n}_1) ~+~
 F(h-1,n-\widetilde{n}_1) +  \sum_{j=2}^{h-1} G_{1 \rightarrow 2}(h-j+1,\widetilde{n}_j).
\end{eqnarray*}
We have $n \ge h \ge 5$. Thus, $h-1 \ge 4$, and by Lemma~\ref{remainder02} we have $n-\widetilde{n}_1 \ge 2$. Applying
Lemma~\ref{tshieit10} with $h-1$ and $n-\widetilde{n}_1$ instead of $h$ and $n$, respectively, we get
\begin{equation} \label{e:desired} \sum_{j=1}^{h-2} G_{1 \rightarrow
2}(h-j,\widetilde{n}_j(h-1,n-\widetilde{n}_1)) ~\le~ \frac{2}{9}F(h-1,n-\widetilde{n}_1).
\end{equation}
By Lemma~\ref{blocksize04} in the particular case
$i=2$, for each $1 \le j \le h-2$,
 \begin {eqnarray*}
 \widetilde{n}_{j+1}(h,n) ~=~ \widetilde{n}_{j}(h-1,n-\widetilde{n}_1).
 \end {eqnarray*}
Consequently, $$\sum_{j=2}^{h-1} G_{1 \rightarrow
2}(h-j+1,\widetilde{n}_j) ~=~ \sum_{j=1}^{h-2} G_{1 \rightarrow
2}(h-j,\widetilde{n}_j(h-1,n-\widetilde{n}_1)) ~\le~ \frac{2}{9}F(h-1,n-\widetilde{n}_1),$$  which provides the required
result.
\end{proof}

\begin{lemma}\label{nlessh12}
For any integers $h \ge 5$ and $1 \le n < h$,
$$F(h,n) \le \widetilde{U}(h,n) = \widetilde{C}_{h}\cdot n^{\alpha_{h}}\cdot 3^{\theta_{h}\cdot n^{\frac{1}{h-2}}} ~<~ U(h,n) ,$$
where $\theta_{h}$ and $\alpha_{h}$ are as in Theorem~\ref{tp1}, and $\widetilde{C}_{h} = \frac{h-2}{\theta_{h}}$.
\end{lemma}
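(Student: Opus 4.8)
The statement splits into two essentially independent claims: that $F(h,n)\le\widetilde U(h,n)$, and that $\widetilde U(h,n)<U(h,n)$ (with $U(h,n)=C_h\,n^{\alpha_h}3^{\theta_h n^{1/(h-2)}}$ the bound of Theorem~\ref{tp1}). The second is immediate: since $\widetilde C_h=\frac{h-2}{\theta_h}$ and $C_h=\frac{(h-2)\delta^{h-3}}{\theta_h}$, one has $U(h,n)=\delta^{h-3}\widetilde U(h,n)$, so it suffices to note that $\delta>1$; indeed $(1/30)^{1/3}>0$ gives $3^{\,2-(1/30)^{1/3}}<9$, hence $\delta>11/9>1$, and for $h\ge 5$ the exponent $h-3\ge 2$ is positive. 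So the work is entirely in the first claim, and the key step there is to obtain an \emph{exact} value of $F(h,n)$ in the regime $n<h$.

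To that end I would first trace \texttt{Partition}$(h,B)$ on a block $B$ of size $n<h$. Using the clause of Lemma~\ref{remainder02} stating that \texttt{Remainder}$(h',n')=n'-1$ whenever $1<n'<h'$ (and $=1$ when $n'=1$), a short induction on the iteration index shows that $\widetilde n_j=1$ for $1\le j\le n-1$, $\widetilde n_j=0$ for $n\le j\le h-2$, and $\widetilde n_{h-1}=1$; the case $n=1$ is degenerate, with $\widetilde n_1=\dots=\widetilde n_{h-2}=0$. Feeding this partition into Lemma~\ref{farthest07} with $i=1$, and using $F(h-j+1,1)=h-j$, $G_{1 \rightarrow 2}(h-j+1,1)=1$, while empty blocks contribute nothing, the sum telescopes to
\[
F(h,n)=\sum_{j=1}^{n-1}\bigl(3(h-j)+(j-1)\bigr)+(h-1)=3hn-n^2-2h,\qquad 1\le n<h,
\]
where the value $h-1$ for $n=1$ is also produced by this expression.

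It then remains to verify $3hn-n^2-2h\le\widetilde U(h,n)$. Setting $x:=\theta_h n^{1/(h-2)}$ and using $n^{\alpha_h}=n/n^{1/(h-2)}$, one rewrites $\widetilde U(h,n)=(h-2)\,n\cdot 3^{x}/x$. Since $n\ge 1$ forces $x\ge\theta_h\ge 1$, and since $\ln 3>1$ gives $3^{x-1}\ge 1+(x-1)\ln 3\ge x$, i.e. $3^{x}/x\ge 3$, we get $\widetilde U(h,n)\ge 3(h-2)n$. Finally $3(h-2)n-(3hn-n^2-2h)=n^2-6n+2h=2h-n(6-n)\ge 2h-9>0$ for $h\ge 5$, since $n(6-n)\le 9$ for every integer $n$. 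Combining, $F(h,n)=3hn-n^2-2h\le 3(h-2)n\le\widetilde U(h,n)$, which finishes the proof.

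The only genuinely non-routine point is determining the output of \texttt{Partition} for $n<h$ and deriving the closed form for $F(h,n)$; once that is in hand, both of the remaining inequalities are elementary. The mild annoyance is the degenerate case $n=1$, handled above simply by checking that the same closed form still applies.
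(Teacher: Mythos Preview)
Your proof is correct and follows essentially the same route as the paper's: both compute the exact value $F(h,n)=n(3h-n)-2h$ from \texttt{Partition} and Lemma~\ref{farthest07}, then bound it by $3n(h-2)$, and finally use $x\le 3^{x-1}$ for $x=\theta_h n^{1/(h-2)}\ge 1$ to show $3n(h-2)\le\widetilde U(h,n)$. Your write-up supplies a few details the paper leaves implicit (the verification that $\delta>1$, the explicit check $n^2-6n+2h\ge 2h-9>0$, and the degenerate case $n=1$), but the argument is the same.
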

\begin {proof}
First note that $\widetilde{U}(h,n) \!<\! U(h,n)$, so it remains to
prove that $F(h,n) \le \widetilde{U}(h,n)$.
Since $h \ge 5$ and $n < h$, Lemmas~\ref{remainder02} and \ref{partition03}
imply that $\widetilde{n}_{i} =
1$ for each $i \in [n-1] \cup \{h-1\}$. Thus by Lemma~\ref{farthest07},
\begin{eqnarray*}
F(h,n) &=&
 \sum_{j=1}^{h-2} \left[3F(h-j+1,\widetilde{n}_j)
+ (j-1)G_{1 \rightarrow 2}(h-j+1,\widetilde{n}_j) \right] + h-1  \cr
 &=& \sum_{j=1}^{n-1} \left[3F(h-j+1,1)
+ (j-1)G_{1 \rightarrow 2}(h-j+1,1) \right] + h-1 \cr
 &=& \sum_{j=1}^{n-1} 3(h-j) + (j-1) + h-1 ~=~ n(3h-n)-2h.
\end{eqnarray*}
It is easy to verify that for $h \ge 5$ and $n < h$,
$$n(3h-n) -2h \le 3n(h-2).$$
Note that $\theta_{h}n^{\frac{1}{h-2}} \ge 1$ for $h \ge 3$ and $n \ge 1$. For $x \ge 1$, we have $x \le 3^{x-1}$. Therefore,
$$3\theta_{h}n^{\frac{1}{h-2}} \le 3^{\theta_{h}n^{\frac{1}{h-2}}},$$
and consequently,
$$3n(h-2) \le \frac{h-2}{\theta_{h}}n^{\alpha_{h}} \cdot
3^{\theta_{h}n^{\frac{1}{h-2}}}.$$
Altogether,
$$ F(h,n) ~\le~
3n(h-2) \le \frac{h-2}{\theta_{h}}n^{\alpha_{h}}\cdot 3^{\theta_{h}\cdot n^{\frac{1}{h-2}}} ~=~ \widetilde{U}(h,n).$$ \QED
\end {proof}

\subsubsection{\large Conclusion of the Proof}
The proof is by double induction on $h \ge 3$ and $n$.

For $h=3$, the algorithm works exactly as does the algorithm of
\cite{ScGrSm44} for the {\rm{3-in-a-row}} graph. Therefore, the
number $F(3,n)$ of moves required by this algorithm for $n$ disks
 is $3^n-1$. The
substitution $h=3$ in the upper bound $U(h,n)$ suggested by the
proposition yields:
$$U(3,n) ~=~ C_h n^{\alpha_h} \cdot 3^{\theta_h n^{1/\frac{1}{h-2}}} ~=~ 1 \cdot n^0 \cdot 3^{1 \cdot n} ~=~ 3^n ~>~
F(3,n).$$
For $h=4$, the algorithm works exactly as does the
algorithm $\texttt{FourMove}$ of Section \ref{h4} for moving $n$ disks
between the two farthest pegs in \texttt{Path$_4$}. Therefore, as
shown in Section \ref{h4}, $F(4,n)$ is bounded above by $1.6 \sqrt{n} \cdot
3^{\sqrt{2n}}$. The substitution $h=4$ in the upper bound $U(h,n)$
suggested by the proposition yields:
$$U(4,n) ~=~ C_h n^{\alpha_h} \cdot 3^{\theta_h n^{\frac{1}{h-2}}} = c\sqrt{2n} \cdot 3^{\sqrt{2n}} > 1.6 \sqrt{n} \cdot
3^{\sqrt{2n}} \ge F(4,n).$$
For $h \ge 5$ and $n < h$, Lemma~\ref{nlessh12} implies that $F(h,n)
< U(h,n)$.

We assume that for arbitrary fixed $h \ge 5$ and $n \ge h$,
$F(h',n') < U(h',n')$ holds for all $(h{'},n{'})$ with either $h{'}
< h$ or both $h{'} = h$ and $n{'} < n$, and prove it for $(h,n)$.

Let $m = m_1 = \texttt{Remainder}(h,n)$. Then $\widetilde{n}_1 = n-m$. By Lemma~\ref{nlargeh11},
$$F(h,n) \le 3F(h,n-m) + \frac{11}{9}F(h-1,m).$$
The analysis splits into two cases. \\\emph
{Case 1: $n \le \frac{(h-2)^{h-2}}{(h-2)!}$.}

In this case, we have
\begin{equation} \label{eq:ss} n ~\le~
\frac{((h-2)!)^{\alpha_h}}{(h-3)!} n^{\alpha_h}, \end
{equation} and so,
$$m ~=~{\rm\texttt{Remainder}}(h,n) ~=~
\min\left\{n,\left\lceil\frac{((h-2)!)^{\alpha_h}}{(h-3)!}
n^{\alpha_h}\right\rceil\right\} ~=~ n.$$ It follows that $F(h,n)
\le \frac{11}{9}F(h-1,n).$ By the induction hypothesis,
\begin{eqnarray} \label{eq111}
 F(h,n) ~<~ \frac{11}{9}U(h-1,n) ~=~ \frac{11}{9}C_{h-1} \cdot n^{\alpha_{h-1}} \cdot 3^{\theta_{h-1}\cdot
 n^{\frac{1}{h-3}}}.
\end{eqnarray}
By (\ref{eq:ss}), we have
\begin{eqnarray}~\label{eq141}
\theta_{h-1}\cdot n^{\frac{1}{h-3}} \le
\theta_{h-1}\!\left(\frac{((h-2)!)^{\alpha_h}}{(h-3)!}
n^{\alpha_h}\right)^\frac{1}{h-3} =
\theta_{h-1}\!\left(\frac{\theta_{h}^{h-3}}{\theta_{h-1}^{h-3}}
n^{\frac{h-3}{h-2}} \right)^{\frac{1}{h-3}} = \theta_{h}\cdot
n^{\frac{1}{h-2}}.
\end{eqnarray}
Substituting (\ref{eq141}) in (\ref{eq111}), we obtain
\begin{eqnarray}~\label{eq151}
F(h,n) < \frac{11}{9}C_{h-1} \cdot n^{\alpha_{h-1}} \cdot
3^{\theta_{h} \cdot n^{\frac{1}{h-2}}}.
\end{eqnarray}
It is easy to verify that $\frac{11}{9}C_{h-1} < C_h$ and
$\alpha_{h-1} < \alpha_h$. Thus we find that:
$$ F(h,n) ~<~ C_{h} \cdot n^{\alpha_{h}} \cdot 3^{\theta_{h} \cdot n^{\frac{1}{h-2}}}.
$$
\\\emph{Case 2: $n > \frac{(h-2)^{h-2}}{(h-2)!}$.}

In this case, we have
\begin
{eqnarray*} n ~>~ \frac{((h-2)!)^{\alpha_h}}{(h-3)!}
n^{\alpha_h}, \end {eqnarray*} and so,
$$m \!=\!
{\rm\texttt{Remainder}}(h,n) \!=\!
\min\left\{n,\left\lceil\frac{((h-2)!)^{\alpha_h}}{(h-3)!}
n^{\alpha_h}\right\rceil \right\} \!=\!
\left\lceil\frac{((h-2)!)^{\alpha_h}}{(h-3)!}
n^{\alpha_h}\right\rceil \!\le\! n.$$ By the induction hypothesis,
\begin{eqnarray}~\label{eq11}
F(h,n) ~<~  3C_{h}(n-m)^{\alpha_{h}} 3^{\theta_{h}\cdot
(n-m)^{\frac{1}{h-2}}}
              +\frac{11}{9}C_{h-1} \cdot m^{\alpha_{h-1}} \cdot 3^{\theta_{h-1} \cdot
              m^{\frac{1}{h-3}}}.
\end{eqnarray}
Observe that
\begin{eqnarray}~\label{eq12}
 (n-m)^{\alpha_{h}}  ~=~ n^{\alpha_{h}}\left(1-\frac{m}{n}\right)^{\alpha_{h}}
                     ~<~ n^{\alpha_{h}}\left(1-\frac{\alpha_{h} m}{n}\right).
\end{eqnarray}
Similarly, we have
\begin{eqnarray}~\label{eq13}
 \theta_{h}(n\!-\!m)^{\frac{1}{h-2}} = \theta_{h}
n^{\frac{1}{h-2}}\left(1\!-\!\frac{m}{n}\right)^\frac{1}{h-2} \!<\!
\theta_{h}
n^{\frac{1}{h-2}}\left(1\!-\!\frac{\theta_{h}^{h-3}\cdot n^{\frac{h-3}{h-2}}} {(h-2) n \theta_{h-1}^{h-3}}\right)
 \!\!=\! \theta_{h} n^{\frac{1}{h-2}} \!-\! 1
\end{eqnarray}
and
\begin{eqnarray*}
 \theta_{h-1} \cdot m^{\frac{1}{h-3}} &\le&
\theta_{h-1}\left(\frac{\theta_{h}^{h-3}}{\theta_{h-1}^{h-3}}\cdot
n^{\frac{h-3}{h-2}} + 1\right)^{\frac{1}{h-3}}
 ~=~ \theta_{h} n^{\frac{1}{h-2}}\left(1 + \frac{\theta_{h-1}^{h-3}}{\theta_{h}^{h-3}}
 n^{-\frac{h-3}{h-2}}\right)^{\frac{1}{h-3}} \cr
  &<& \theta_{h} n^{\frac{1}{h-2}} + \frac{(h-3)! \cdot \theta_{h}^{2}}{(h\!-\!3)(h\!-\!2)!} n^{-\frac{h-4}{h-2}}
 ~=~
 \theta_{h} n^{\frac{1}{h-2}}+\frac{\theta_{h}^{2}}{(h\!-\!3)(h\!-\!2)} n^{-\frac{h-4}{h-2}}.
\end{eqnarray*}
It is easy to verify that $\vartheta(h,n) =
\frac{\theta_{h}^{2}}{(h-3)(h-2)} n^{-\frac{h-4}{h-2}}$ is
monotone decreasing with $h$ and $n$ in the range $n \ge h \ge 5$.
Hence for $n \ge h \ge 5$, we have
$$\vartheta(h,n) ~\le~ \vartheta(5,5) ~=~ \left(\frac{1}{30}\right)^{1/3}.$$
Put $c^* = \left(\frac{1}{30}\right)^{1/3}$.
 Now
\begin{eqnarray}~\label{eq14}
 \theta_{h-1}\cdot m^{\frac{1}{h-3}}  &<&
 \theta_{h} \cdot n^{\frac{1}{h-2}} + c^*.
\end{eqnarray}

 Substituting (\ref{eq12}), (\ref{eq13}) and (\ref{eq14}) in
(\ref{eq11}), we obtain
\begin{eqnarray}~\label{eq15}
\begin{array}{lll}
F(h,n) & < & 3C_{h} n^{\alpha_{h}}\left(1-\frac{\alpha_{h}\cdot
m}{n}\right)3^{\theta_{h}\cdot n^{\frac{1}{h-2}}-1}
    + \frac{11}{9} C_{h-1} \cdot m^{\alpha_{h-1}} \cdot 3^{\theta_{h} \cdot n^{\frac{1}{h-2}} + c^{*}} \cr
       & = & C_{h} n^{\alpha_{h}}\left(1-\frac{\alpha_{h}\cdot m}{n}\right)3^{\theta_{h}\cdot n^{\frac{1}{h-2}}}
 + \frac{11}{9} 3^{c^{*}}C_{h-1} \cdot m^{\alpha_{h-1}} \cdot 3^{\theta_{h} \cdot n^{\frac{1}{h-2}}} \cr
       & = & \left(C_{h} n^{\alpha_{h}} - \frac{C_{h}\cdot n^{\alpha_{h}}\cdot \alpha_{h}\cdot m}{n}
   + \delta \cdot C_{h-1}\cdot m^{\alpha_{h-1}}\right) 3^{\theta_{h}\cdot n^{\frac{1}{h-2}}}.
\end{array}
\end{eqnarray}

The second and third terms on the right-hand side of (\ref{eq15}) may be omitted
since:
$$\begin{array}{lll}
\delta C_{h-1}\cdot m^{\alpha_{h-1}} -
\frac{C_{h}\cdot n^{\alpha_{h}}\cdot \alpha_{h}\cdot m}{n} & = & m^{\alpha_{h-1}}\left(\delta\frac{(h-3)\delta^{h-4}}{\theta_{h-1}} -
 \frac{(h-2)\delta^{h-3}}{n \cdot \theta_{h}}\cdot n^{\frac{h-3}{h-2}} \cdot \frac{h-3}{h-2}
\cdot m^{\frac{1}{h-3}}\right) \cr
 & = & m^{\alpha_{h-1}}(h-3)\delta^{h-3}\left(\frac{1}{\theta_{h-1}} -
 \frac{n^{\frac{-1}{h-2}}\cdot m^{\frac{1}{h-3}}}{\theta_{h}}\right) \cr
 & \le & m^{\alpha_{h-1}}(h\!-\!3)\delta^{h-3}\left(\frac{1}{\theta_{h-1}} \!-\!
 \frac{n^{\frac{-1}{h-2}}}{\theta_{h}}
 \left(\frac{\theta_{h}^{h-3}}{\theta_{h-1}^{h-3}}\cdot n^{\frac{h-3}{h-2}}
 \right)^{\frac{1}{h-3}}\right) \cr
 & = & 0.
\end{array}$$
Thus we conclude that:
$$\begin{array}{l}
 F(h,n) ~<~ C_{h} n^{\alpha_{h}} \cdot 3^{\theta_{h}\cdot n^{\frac{1}{h-2}}}.
\end{array}$$\QED

\end {document}